\newcommand{\sstitle}[1]{\vspace*{0.4em}\noindent{\bf #1:\/}}
\lstdefinelanguage{SQL}{
  morekeywords={SELECT, FROM, WHERE, JOIN, ON, AND, OR, AS, INNER, LEFT, RIGHT},
  sensitive=false,
  morecomment=[l]{--},
  morestring=[b]',
}
\newcommand{\high}[1]{\uline{#1}}
\newcommand{\unfinished}[1]{\textcolor{red}{#1}}
\definecolor{lightgray}{gray}{0.85}
\newcommand{\higher}[1]{\cellcolor{gray!25}{#1}}
\newcommand{\wcoj}[1]{\textbf{#1}}
\tikzset{
  every node/.append style={
    execute at begin node=$, execute at end node=$
  },
  vtx/.style={
    circle,draw=black,line width=0.8pt,
    minimum size=15pt,inner sep=1.1pt,font=\small
  },
  e/.style={-Stealth,line width=0.8pt,>=Stealth}
}
\newcommand\sysname{\textsf{SplitJoin}}
\newcommand\vldbdoi{XX.XX/XXX.XX}
\newcommand\vldbvolume{14}
\newcommand\vldbissue{1}
\newcommand\vldbavailabilityurl{https://github.com/hyj2003/SplitJoin}
\newcommand\vldbpagestyle{plain} 
\begin{document}
\title{One Join Order Does Not Fit All: Reducing Intermediate Results with Per-Split Query Plans}

\author{Yujun He$^{1,2*}$, Hangdong Zhao$^{3}$, Simon Frisk$^{1}$, Yifei Yang$^{1}$, Kevin Kristensen$^{1}$ \\  Paraschos Koutris$^{1}$, Xiangyao Yu$^{1}$}
\affiliation{
  \institution{$^1$University of Wisconsin-Madison \ \ \ \
  $^2$Southern University of Science and Technology \ \ \ \ 
  $^3$Microsoft Gray Systems Lab}
}
\thanks{$^*$Work done while interning at the University of Wisconsin–Madison.}

\begin{abstract}
Minimizing intermediate results is critical for efficient multi-join query processing. Although the seminal Yannakakis algorithm offers strong guarantees for acyclic queries, cyclic queries remain an open challenge. In this paper, we propose \sysname{}, a framework that introduces \textit{split} as a first-class query operator. By partitioning input tables into heavy and light parts, \sysname{} allows different data partitions to use distinct query plans, with the goal of reducing intermediate sizes using existing binary join engines. We systematically explore the design space for split-based optimizations, including threshold selection, split strategies, and join ordering after splits. Implemented as a front-end to DuckDB and Umbra, \sysname{} achieves substantial improvements: on DuckDB, \sysname{} completes 43 social network queries (vs. 29 natively), achieving 2.1$\times$ faster runtime and 7.9$\times$ smaller intermediates on average (up to 13.6$\times$ and 74$\times$, respectively); on Umbra, it completes 45 queries (vs. 35), achieving 1.3$\times$ speedups and 1.2$\times$ smaller intermediates on average (up to 6.1$\times$ and 2.1$\times$, respectively).
\end{abstract}

\maketitle

%%% do not modify the following VLDB block %%
%%% VLDB block start %%%
\pagestyle{\vldbpagestyle}
% \begingroup\small\noindent\raggedright\textbf{PVLDB Reference Format:}\\
% \vldbauthors. \vldbtitle. PVLDB, \vldbvolume(\vldbissue): \vldbpages, \vldbyear.\\
% \href{https://doi.org/\vldbdoi}{doi:\vldbdoi}
% \endgroup
\begingroup
\renewcommand\thefootnote{}\footnote{\noindent
This work is licensed under the Creative Commons BY-NC-ND 4.0 International License. Visit \url{https://creativecommons.org/licenses/by-nc-nd/4.0/} to view a copy of this license. For any use beyond those covered by this license, obtain permission by emailing \href{mailto:info@vldb.org}{info@vldb.org}. Copyright is held by the owner/author(s). Publication rights licensed to the VLDB Endowment. \\
\raggedright Proceedings of the VLDB Endowment, Vol. \vldbvolume, No. \vldbissue\ %
ISSN 2150-8097. \\
\href{https://doi.org/\vldbdoi}{doi:\vldbdoi} \\
}\addtocounter{footnote}{-1}\endgroup
%%% VLDB block end %%%

%%% do not modify the following VLDB block %%
%%% VLDB block start %%%
\ifdefempty{\vldbavailabilityurl}{}{
\vspace{.3cm}
\begingroup\small\noindent\raggedright\textbf{PVLDB Artifact Availability:}\\
The source code, data, and/or other artifacts have been made available at \url{\vldbavailabilityurl}.
\endgroup
}
%%% VLDB block end %%%

\section{Introduction} \label{sec:intro}

% \hyj{- Duplicate causes problem on query optimization; \newline - A lot of things people have done are trying to solve this problem(two perspective, system and algorithm), however, this problem is not well solved. \newline - We notice that split is a powerful way to solve this problem; it has been well studied in the theory world but has not been widely adopted in practical systems.  In this paper, we take the initial step to apply splits to practical systems, by exploring the design spaces and giving our initial solutions.}
Minimizing intermediate join results has been a crucial goal for achieving good performance for multi-join queries. For \textit{acyclic} queries, the seminal Yannakakis algorithm~\cite{DBLP:conf/vldb/Yannakakis81} can provably guarantee intermediate table sizes of at most $O(N+OUT)$, where \textit{N} and \textit{OUT} are the sizes of query input and output, respectively. This can lead to fast and robust query plans in practical systems~\cite{yang2024predicatetransfer, zhao2025robust, yang2025predicate, wang2025yannakakis}. For \textit{cyclic} queries, however, the intermediate table sizes can be much larger than the Yannakakis bound, and thus are challenging to minimize in both theory and practical systems.

Prior work has extensively studied how to address the inefficiency of binary join plans on \textit{cyclic} queries and skewed data (i.e., some values appear many times in a column). Worst-case optimal join (WCOJ) algorithms~\cite{Veldhuizen:ICDT:2014,Ngo2018WCOJ,Salihoglu:SIGMODRec:2023} guarantee runtimes bounded by the maximum possible output size, thereby avoiding the quadratic intermediate blow-ups that can arise in traditional plans. For instance, in the triangle query, a binary join plan may generate $\Theta(N^2)$ intermediate tuples, while a WCOJ algorithm completes in $O(N^{1.5})$ time. Despite this theoretical elegance, WCOJ has not been widely adopted in practice: binary joins remain dominant after decades of engineering optimization, WCOJ implementations typically rely on complex indexing structures, and efficient parallelization is challenging even with recent advances such as HoneyComb~\cite{DBLP:conf/sigmod/WuWZ22}.

Complementary to WCOJs, another line of theoretical work investigates \textit{partition-based strategies}~\cite{AboKhamisNgoSuciu2025,Deep2020FastJoin,Deep2024Output,Hu2024Fast}. These algorithms mitigate skew directly by separating heavy and light values in the join key domains and evaluating them under different subplans. By isolating values with high occurrence, partitioning provides finer control over join orders and reduces intermediate result sizes. In the triangle query, for example, partitioning can also guarantee a $O(N^{1.5})$ bound on intermediate results. Although these strategies are theoretically powerful, their adoption in real systems remains limited due to the complexity of existing approaches~\cite{AboKhamisNgoSuciu2025}, leaving a gap between theory and practice.

Building on the above insights, we take the initial step toward making partition-based strategies practical. Our goal is to bridge the gap between theoretical advances and real-world query engines. To achieve this, we propose \sysname{}, which introduces \textbf{split} as a first-class operator in query processing. The split operator partitions an input table $R$ into a light table ($R_L$) and a heavy table ($R_H$), which are then processed with potentially different query plans.
\sysname{} can substantially reduce intermediates without requiring new data structures like in existing WCOJ algorithms. In this paper, we systematically explore how splits can be incorporated into traditional execution frameworks during query planning. 

Our first contribution is a general framework (Section~\ref{sec:sys_overview}) that incorporates {split} as a first-class citizen in a query plan. This framework reveals a rich design space with several key dimensions, including which table(s) to split, on which attributes to split, how to determine the appropriate split thresholds, and how to decide the best join order on each part of the partition.

From a theoretical perspective, we show in Section~\ref{sec:theory} that our framework can be used to create a large class of provably worst-case optimal algorithms for queries on binary relations that all have the same cardinality. However, achieving this requires strict rules on how to split and which join orders to use. To improve practical performance, we relax these rules in favor of heuristics. 

From a practical perspective, we implement an instantiation of our framework as \sysname{} (Section~\ref{sec:opt}), a front-end layer that can work for any SQL database: it takes a join query as input, generates an optimized query that incorporates split operators where appropriate, and submits it to the underlying database engine for execution. This design allows \sysname{} to take advantage of the existing query processing infrastructure that may support only binary joins while transparently introducing split-based optimizations. We conduct experiments by placing this layer in front of two state-of-the-art databases, DuckDB~\cite{DBLP:conf/cidr/RaasveldtM20} and Umbra~\cite{Freitag2020WCOJ}, and evaluate on commonly used social network datasets, which often exhibit data skew and cyclic query patterns. Our experiments demonstrate that \sysname{} significantly reduces intermediate result sizes and improves query execution time, while preserving the efficiency of carefully optimized binary joins. In particular, DuckDB with \sysname{} successfully completes 43 queries compared to 29 with native DuckDB, and Umbra with \sysname{} completes 45 queries compared to 35 with native Umbra.
For the subset of queries that can finish both with and without \sysname{}, \sysname{} delivers substantial improvements: on DuckDB, queries are 2.1$\times$ faster and produce 7.9$\times$ smaller intermediates on average (up to 13.6$\times$ and 74$\times$, respectively); on Umbra, it achieves 1.3$\times$ speedups and 1.2$\times$ smaller intermediates on average (up to 6.1$\times$ and 2.1$\times$, respectively). Notably, for all query plans generated by \sysname{} in our experiments, we have manually checked and verified that they are provably worst-case optimal, demonstrating that our proposed heuristics work well for the tested queries and data.

The remainder of this paper is organized as follows. Section~\ref{sec:background} introduces the background and motivation of our work, highlighting the limitations of existing solutions. Section~\ref{sec:sys_overview} presents an overview of \sysname{}'s design space. Section ~\ref{sec:theory} provides the theoretical motivation underlying our design. Section ~\ref{sec:opt} describes our current heuristics and optimizations in each design dimension. Section ~\ref{sec:exp} reports experimental findings and Section ~\ref{sec:conclusion} concludes the paper.

\section{Background and Related Work} \label{sec:background}

% \hyj{- A case for duplicates shows how split can work. Then briefly talk about prior related work, like WCOJ.}

We focus on the evaluation of natural join queries. As an example, given three relations $R(A,B)$, $S(B,C)$, and $T(C,A)$, our objective could be to evaluate the triangle query, expressed as follows:
\[
Q(A,B,C) \;=\; R(A,B) \bowtie S(B,C) \bowtie T(C,A).
\]
In SQL, this query can be written as:
\begin{lstlisting}[language=SQL]
SELECT R.A, R.B, S.C
FROM R, S, T
WHERE R.B = S.B AND R.A = T.A AND S.C = T.C;
\end{lstlisting}

In this work, we primarily focus on joins over binary relations, i.e., relations with exactly two attributes. 
Joins over binary relations have been extensively studied in prior work~\cite{tao:LIPIcs.ICDT.2020.25,Birler2024Diamond,Ngo2018WCOJ} and provide a foundation that can be extended to more general multi-attribute cases. Moreover, this setting is particularly relevant for graph workloads, where tuples typically represent edges and the data often exhibit skew, which makes binary relations a representative and practically important case to study.

For a (natural) join query $Q$ with binary relations, we define the \textit{query graph} as follows: for every uniquely named attribute $X$ in the query, we create a vertex $X$. For a relation $R(A,B)$, we create an edge between the vertices $A$ and $B$. If $e$ is an edge of the query graph, we use $R_e$ to refer to the relation that corresponds to it.
%See Example \ref{example:TheoryAlgExample}.

% \yxy{can we justify this a bit? like maybe we follow this convention as in previous work in this topic, or maybe this is typical in graph workloads, etc. Otherwise ppl may interpret this as a key limitation of our work}

\subsection{The Problem with Data Skew}

\begin{figure*}[ht]
  \centering
  \subfloat[Binary Join (No data skew)]{
    \includegraphics[height=2.15in]{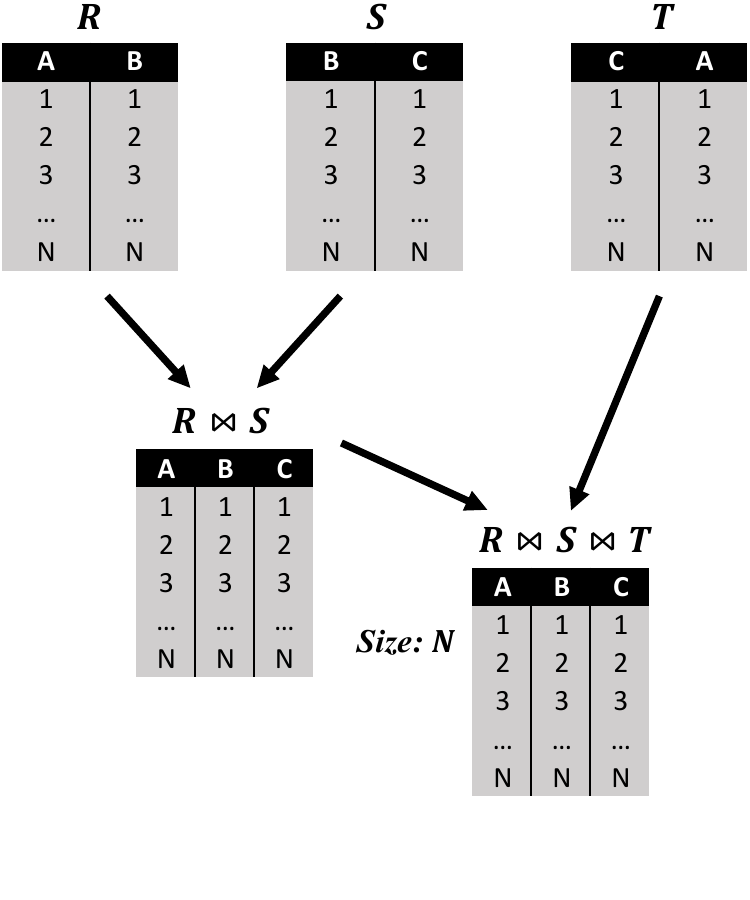}
  }
  % \hfill
  \subfloat[Binary Join (Data Skew)]{
    \includegraphics[height=2.15in]{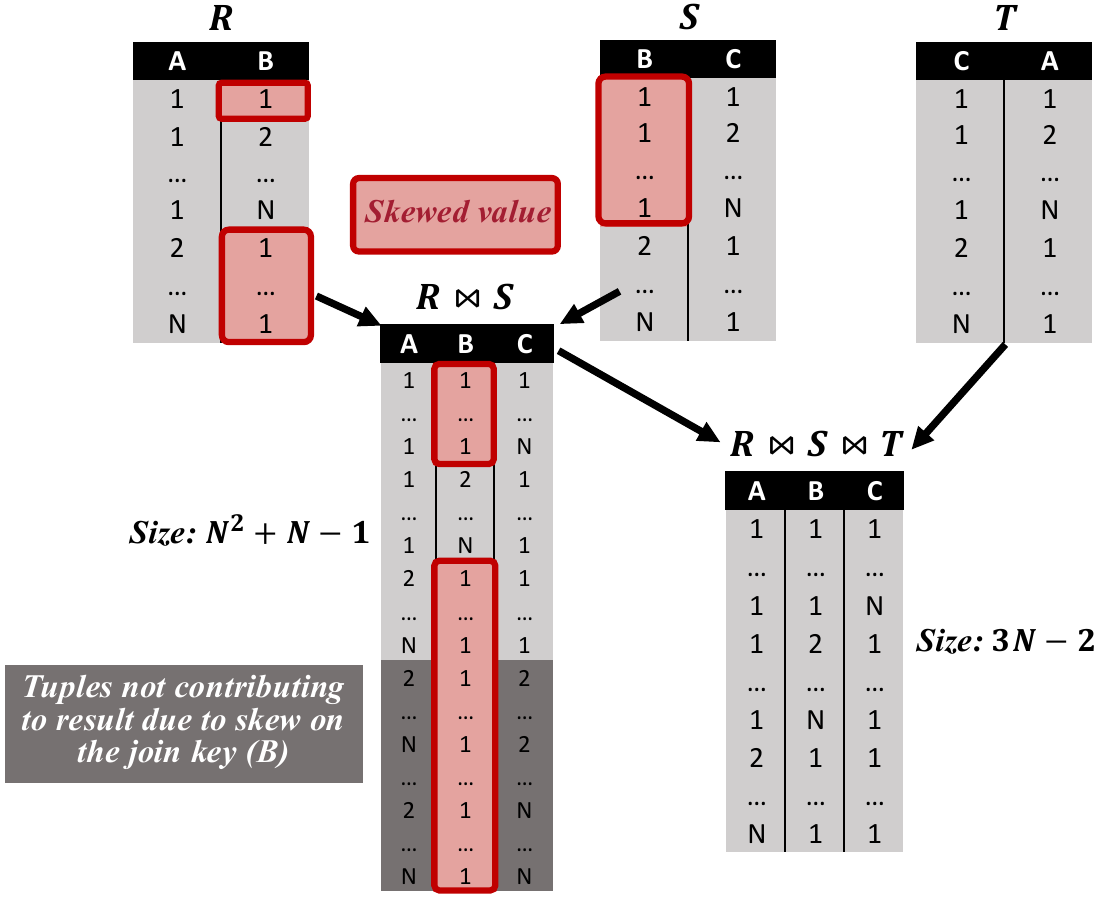}                
  }
  % \hfill
  \subfloat[Split-Based Binary Join (Data Skew)]{
    \includegraphics[height=2.15in]{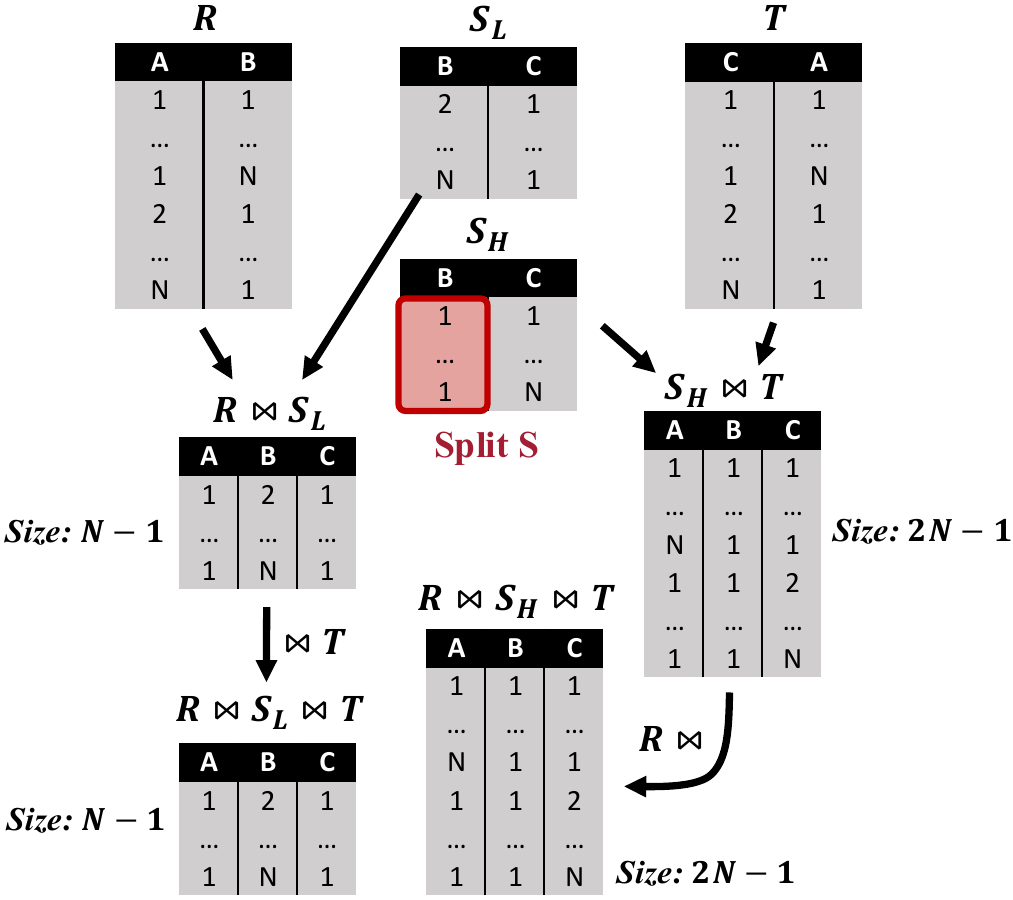}
  }
  \caption{Motivating Example on How Split Avoids Producing Intermediate Tuples That Do Not Contribute to the Final Result.
  % \yxy{I cannot select text in this figure. Is it converted from png/jpg to pdf? I suggest directly generating pdf figure from ppt.}
  %\yxy{The figure should show N rows in each table, instead of 5.}
  }
\label{fig:split-example}
\end{figure*}

Most database engines evaluate join queries using a {\em binary join plan}: this plan forms a tree of join operators, each operator computing the join between two tables. Binary join plans are theoretically optimal when the structure of the query is {\em acyclic}, as shown by Yannakakis~\cite{DBLP:conf/vldb/Yannakakis81}. In these cases, we can find a join plan that runs in time $O(N + OUT)$, where $N$ is the database input size and $OUT$ is the size of the output.

However, for non-acyclic queries (as the triangle query in our example), binary join plans are provably suboptimal~\cite{AGM}. We explain this observation next with an example using the triangle query.

Let's first consider the simple no-skew case where all input relations are as in Figure~\ref{fig:split-example}(a):
\[
R = S = T = \{(1,1), (2,2), \ldots, (N,N)\}.
\] 
We have three possible join orders, and for any order the intermediate relation (which is $R \bowtie S$, $R \bowtie T$, or $S \bowtie T$) has exactly $N$ tuples. This is a good case for binary join, where the intermediate table does not explode in size. 

In contrast, consider the following skewed example (Figure~\ref{fig:split-example}(b)):
\[
R = S = T = \{(1,1), (1,2), (1,3), \ldots, (1,N), (2,1), (3,1), \ldots, (N,1)\}.
\]
In this instance, the data is highly skewed, where in table $R$, the value $A=1$ appears $N$ times and the value $B=1$ also appears $N$ times (similarly for tables $S$ and $T$). This skew leads to a disproportionately large number of intermediate tuples for binary join. Specifically, regardless of the chosen join order, we will create an intermediate table with $\Theta(N^2)$ tuples, while the output size of the entire query is actually $3N-2$, which is linear in $N$. 

% This example illustrates how traditional join processing can be severely impacted by data skew, resulting in significant inefficiency.

To further illustrate why a join result can explode in size, we define the {\em degree} of a value: for a value $a$ in the domain of $A$, its {\em degree} is defined as
$d_R(a) = |\sigma_{A=a} R|$ (i.e., the number of tuples in relation $R$ whose attribute $A$ equals $a$), and define the {\em maximum degree} of attribute $A$ in $R$ as
$\delta_R(A) = \max_{A=a} d_R(a)$ (i.e., the largest degree of any value in a given attribute). 

When joining $R$ with $S$ on attribute $B$, 
a single tuple in $S$ may match up to $\delta_R(B)$ tuples in $R$, 
implying that the output size of the join contributed by that tuple is at most $\delta_R(B)$. Consequently, the join output size of $R$ and $S$ is bounded by $\delta_R(B) \cdot N$.

In the first, no-skew example above, $\delta_R(B) = 1$ and $\delta_S(B) = 1$, 
so the intermediate join result grows linearly with $N$, and no explosion occurs. 
In contrast, in the skewed example, the maximum degree of any attribute in any relation is always $N$. In this case, a single join key value in a relation can match $N$ tuples in the other relation, producing $\Theta(N^2)$ intermediate tuples in the worst case, even though the final output size of the query may be much smaller.

This analysis above illustrates that a fundamental reason of intermediate result explosion in binary join plans is {\em data skew}: high-degree values in the join attributes can lead to disproportionately large intermediate relations, significantly impacting the efficiency of binary join processing.

% \yxy{I think we need to further explain why a join result can explode in size. Below is just a draft, please polish: it is because the join columns in both tables are skewed. (first properly define "degree" and "max-degree") If the max-degree=k for one table's join key, then the join output size is at most k*N (because it can cause 1 row in the second table to become k rows in the output). In the first example above, max-degree = 1 so there is no problem. But in the second example, max-degree=N for both tables' join key so the output size becomes $N^2$. Then we can summarize and say the intermediate explosion problem is fundamentally because of skew.}

% \yxy{You can consider moving some contents from Sec 4.1 here}

\subsection{Worst-Case Optimal Joins}

To solve the problem we illustrated above in skewed cyclic queries, a line of work has focused on developing \textit{worst-case optimal join (WCOJ)} algorithms~\cite{Freitag2020WCOJ,Aref2015LogicBlox,Veldhuizen:ICDT:2014,Wang:ProcACM:2023,Salihoglu:SIGMODRec:2023,Feng:CIDR:2023,Arroyuelo:SIGMOD:2021,Brisaboa:SPIRE:2015,Chu:SIGMOD:2015,DBLP:conf/sigmod/WuWZ22,Aberger:SIGMOD:2017,Ngo2018WCOJ}. These algorithms guarantee that the runtime is bounded by the worst-case size of the query output, rather than the potentially much larger intermediate results generated by traditional binary join plans. For example, in theory, all triangle queries can be executed in time $O(N^{1.5})$, instead of $O(N^2)$.
Well-known algorithms such as Leapfrog Triejoin~\cite{Veldhuizen:ICDT:2014} have been successfully adopted in several systems~\cite{Feng:CIDR:2023,Aref2015LogicBlox,Freitag2020WCOJ,Salihoglu:SIGMODRec:2023,Aberger:SIGMOD:2017}, 
and they provide strong theoretical guarantees, where both query runtime and intermediate table sizes can be bounded by $O(N^{1.5})$ for triangle queries.

We identify three key limitations of existing WCOJ solutions. First, these algorithms require sophisticated indexing structures (e.g., hash trie) which are not supported by most database systems today. Second, in real-world workloads, WCOJ algorithms may not always outperform carefully optimized binary joins. Third, these special indexing structures are inherently limited in exploiting modern multi-core architectures and in taking advantage of the well-established efficiency of binary join algorithms~\cite{Ngo2013Skew}. Recent efforts, such as HoneyComb~\cite{DBLP:conf/sigmod/WuWZ22}, have attempted to parallelize WCOJ algorithms, yet the complexity of indexing and limited scalability remain challenging.

\subsection{Partitioning to the Rescue}

Another line of theoretical research explores \textit{partition-based} strategies ~\cite{AboKhamisNgoSuciu2025,Deep2024Output,Deep2020FastJoin,Hu2024Fast} to mitigate the data skew problem. These algorithms separate the ``heavy'' (high-degree) and ``light'' (low-degree) values in the join key domains and evaluate each partition under different subplans. By treating values with high degree independently, these approaches limit the maximum degree we mentioned before and gain finer-grained control over join orders, thus reducing the size of intermediate results. 

% \yxy{I suggest we define what "hub-like" means clearly, or avoid using this term entirely (especially if we don't use it later in the paper)}

% \yxy{We need to more rigorously explain how heavy-light split works before discussing the example in Figure 1. In particular, we can call back to Section 2.1 the new contents I suggested to add---we can reduce the intermediates by reducing max-degree. Then we need to explain that heavy-light split can reduce the max-degree for each sub-table, from the worst-case $N$ to at most $\sqrt{N}$.}

%Such strategies are particularly effective when skew or duplication dominates the input relations, complementing WCOJ algorithms by addressing data distribution imbalance directly. 

Taking the general triangle query for example, assume there are no duplicate tuples in $R$, $S$, and $T$. 
We split $S$ into two tables: $S_H$, consisting of tuples whose $B$-values appear more than $\sqrt{N}$ times, 
and $S_L$, containing the remaining tuples. For $S_H$, there can be at most $\sqrt{N}$ distinct $B$-values. Since no duplicate $(b,c)$ pairs exist, each $C$-value can appear at most $\sqrt{N}$ times---once per distinct $B$-value---with at most $\sqrt{N}$ different $C$-values. 
Consequently, $\delta_{R_H}(C)\leq \sqrt N$, so executing $(S_H(B,C)\bowtie T(C,A))\bowtie R(A,B)$ produces at most $O(N^{1.5})$ intermediate results. For $S_L$, each $B$-value occurs at most $\sqrt{N}$ times, meaning that $\delta_{S_L}(B)\leq \sqrt{N}$, and the total size of a different join order $(R(A,B)\bowtie S_L(B,C))\bowtie T(C,A)$ is also bounded by $O(N^{1.5})$.  

The general triangle query illustrates how the heavy-light partitioning strategy effectively reduces the maximum degree 
of join keys in each sub-table, thereby mitigating intermediate result explosion caused by skew. Returning to our running example, in Figure~\ref{fig:split-example}(c), if we split table $S$ into $S_H=\{(1,1),(1,2), (1,3), \ldots, (1,N)\}$ and $S_L=\{(2,1), (3,1), \ldots, (N,1)\}$ depending on the degree on attribute $A$, and execute $(S_H(B,C)\bowtie T(C,A))\bowtie R(A,B)$ and $(R(A,B)\bowtie S_L(B,C))\bowtie T(C,A)$ separately, the intermediate result size would be only $(2N-1)$ for the former subquery and $(N-1)$ for the latter.

Although partitioning has been well studied in theory, it has not been adopted in real systems due to the complexity of existing approaches. For example, The theoretical state-of-the-art PANDA algorithm~\cite{AboKhamisNgoSuciu2025} further partitions each relation into not only two, but multiple pieces; and each one is part of a different sub-plan. This gap between theoretical benefits and practical applicability motivates us to ask the following question, which drives this work: {\em how can we leverage \textbf{split} as a new operator in query processing to solve the data skew problem? }

\subsection{Other Related Work}

\sstitle{Free Join}
Free Join~\cite{Remy2023FreeJoin} made a step toward reconciling the theoretical guarantees of WCOJ with the practical advantages of binary hash-join execution. It provides a unified framework that, in principle, can capture both robustness and efficiency. In practice, however, Free Join suffers from several shortcomings: the current implementation is restricted to a single-threaded setting, its reliance on specialized index structures complicates deployment, and the current optimization procedure does not always produce high-quality plans. These challenges limit its applicability in modern query engines, leaving the question open of how to systematically close the gap between theory and practice. % advances and practical execution strategies.

\sstitle{Lookup \& Expand}
A recent work aims to minimize intermediate results in join processing by decomposing a traditional hash join into two sub-operations: \emph{Lookup} and \emph{Expand}~\cite{Birler2024Diamond}. In this framework, Lookup verifies the existence of a matching tuple in the join partner relation without immediately materializing all join partners, thereby filtering out dangling tuples early. Expand, in contrast, is invoked only when it becomes necessary to enumerate all matches, which postpones the creation of large intermediate results. This separation enables more robust join execution and opens opportunities for reordering these two operators in the query plan. The Lookup \& Expand paradigm may be further enhanced when combined with \sysname{}, which further limits the intermediate result size.

% This separation enables the optimizer to flexibly reorder Lookup and Expand operators in the 
% execution plan, reducing redundant work and avoiding the flattening of factorized structures in hash joins. Compared to 
% traditional binary hash joins, this design significantly improves performance in scenarios such as n:m joins or cyclic queries, 
% while still maintaining efficiency for simpler workloads.

\section{The Splitting Framework} \label{sec:sys_overview}

% In this section, we present the high-level idea of the split from a system perspective.

% - Add a split layer in the query optimizer, and the optimizer should be aware of the existing split.
% - Show a basic algorithm；
% - There is a huge design space contained in this layer -> show the vast parameters in this design space. 

\tikzset{
  box/.style={draw, thick, minimum width=1.5cm, minimum height=0.5cm, align=center},
  redbox/.style={draw, thick, fill=red!20, minimum width=2.5cm, minimum height=1cm, align=center},
  arrow/.style={->, thick}
}

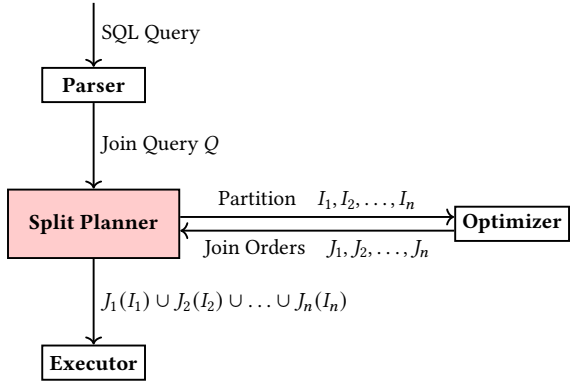
\begin{figure}[t]
\centering
\resizebox{0.9\linewidth}{!}{
    \begin{tikzpicture}[node distance=1.25cm]
    
    % Nodes
    \node[box] (parser) {{\bf Parser}};
    \node[redbox, below=of parser] (split) {{\bf Split Planner}};
    \node[box, below=of split] (executor) {{\bf Executor}};
    \node[box, right=4cm of split] (optimizer) {{\bf Optimizer}};
    
    % Arrows
    \draw[arrow] (parser) -- (split) node[midway, right] {\text{Join Query }Q};
    \draw[arrow] (split) -- (executor) node[midway, right] {J_1(I_1) \cup J_2(I_2) \cup \ldots \cup J_n(I_n)};
    \draw[arrow] ([yshift=0.1cm] split.east) -- ([yshift=0.1cm] optimizer.west) node[midway, above] {\text{Partition}  \quad I_1, I_2, \ldots, I_n};
    \draw[arrow] ([yshift=-0.1cm] optimizer.west) -- ([yshift=-0.1cm] split.east) node[midway, below] {\text{Join Orders} \quad J_1, J_2, \ldots, J_n};
    
    % Extra incoming arrow to parser
    \draw[arrow] (0,1.2) -- (parser.north) node[midway, right] {\text{SQL Query}};
    \end{tikzpicture}
}
\caption{Overview of the Split Planner.}
\label{fig:overview}
\end{figure}

%\yxy{Maybe we can have an even simpler architecture in Figure 2: SQL query Q comes into "Split Planner", which outputs a new query $Q' = Q1\cup Q2 \cup Q3 ...$ where Qi is a subquery due to splits. Then we send the new query Q' to the database. So Parser, Optimizer, and executor are not even in the picture. We can also indicate in the diagram that the query optimizer in the database should be skew-aware in order to take full advantage of the technique.}

In this section, we present a general framework that incorporates splitting into query evaluation.

The framework extends the query optimizer with an additional component called a \emph{split planner} (see Figure~\ref{fig:overview}). The split planner consists of two phases: the {\em split phase} (Algorithm ~\ref{algo:split}) and the {\em join phase} (Algorithm~\ref{algo:join}). This component is integrated into the optimizer so that query planning is not only aware of join orders and access paths, but also of possible split strategies.

The split phase takes as input the join query $Q$, a database instance $I$ (i.e., the set of tables referenced by $Q$), and a set of splits $\Sigma$.  A {\em split} consists of a pair $(\mathcal{R},A)$, where $\mathcal{R}$ is a set of relations, and $A$ is an attribute that appears in all relations $R \in \mathcal{R}$. For every split $(\mathcal{R},A)$, we split the values of the attribute $A$ into two subsets: light ($A_L$) and heavy ($A_H$). Then, every relation $R \in \mathcal{R}$ is split into a light ($R_L$) and heavy ($R_H$) part according to the split of the join attribute. This creates two subinstances $I_L, I_H$. Each of the two subinstances is then recursively partitioned using the remaining splits in the split set $\Sigma$ until $\Sigma$ becomes empty. In general, we want to make sure that every relation is split at most one time in $\Sigma$ (some relations may not be split at all). The end result is a partition of $I$ in smaller subinstances $I_1, I_2, \dots$

Algorithm ~\ref{algo:split} is parametrized by two functions. The first function is \textit{splitAttribute}, which determines the heavy and light values of the common (join) attribute. The second is a function \textit{chooseSplitSet}, which will select a split set $\Sigma$ that will be used as the initial input. We will discuss their possible implementation in the next two sections.

The join phase takes as input the subinstances $I_1, I_2, \dots$ generated from the split phase, and then decides an appropriate join order for each subinstance. Finally, it computes the union of the results to form the final query output. One important consideration in the join phase is that the optimizer needs to incorporate the properties of this subinstance (e.g., what is light, what is heavy) in order to find the best plan. We will also discuss how this can be achieved in the next two sections.

%Specifically,  and  illustrate a basic version of the \emph{split planner}'s workflow for the case that all the input relations have only two columns (binary tables).

%This high-level design provides a flexible framework where different heuristics or cost models can be plugged in to guide the split and join decisions.
%The motivation for introducing this planner stems from the fact that the basic \emph{split planner} exposes a rich design space: one must decide when to stop splitting, how to set the \emph{split threshold} (i.e., the degree below which values are considered light), which relation to split, which attribute to use as the split key, and identifying opportunities for lightweight joins that can be handled without splitting. 
%Careful exploration of this space enables the optimizer to balance the benefits of reduced intermediate sizes against the overhead introduced by additional splits.

\begin{algorithm}[t]
\caption{\textbf{Split Phase}}
\label{algo:split}
\KwIn{Join query $Q$, instance $I$, split set $\Sigma$}
\KwOut{a partition $\{I_1, I_2, \dots\}$}

\If{$\Sigma = \emptyset$}{
    \Return $\{I$\}}
$(\mathcal{R},A) \gets$ a split in $\Sigma$ \;
$(A_L, A_H) \gets \textit{splitAttribute}(\mathcal{R},A)$ \;
\For{every $R \in \mathcal{R}$}{
$R_L \gets \{t \in R \mid t.A \in A_L \}$ \;
$R_H \gets \{t \in R \mid t.A \in A_H \}$ \;
}
$I_L \gets I$ with $\mathcal{R}$ replaced by $\{R_L \mid R \in \mathcal{R}\}$ \;
$I_H \gets I$ with $\mathcal{R}$ replaced by $\{R_H \mid R \in \mathcal{R}\}$ \;
$\Sigma \gets \Sigma \setminus \{(\mathcal{R},A)\}$ \;
\Return $\textbf{Split}(Q, I_L,\Sigma) \cup\textbf{Split}(Q, I_H,\Sigma)$
\end{algorithm}

\begin{algorithm}[t]
\caption{\textbf{Join Phase}}
\label{algo:join}
\KwIn{Join query $Q$, partition $\mathcal{I} = \{I_1, I_2, \dots \}$ }
\KwOut{Q(I)}
% \yxy{I thought in the eventual deployment, we will simply use the underlying database's query optimizer? In that case, this algorithm will simply execute G using the database. I think the only constraint we want to add is to ensure the query optimizer is skew aware, meaning the cardinality estimation takes max-degrees (which are derived in the split phase in Alg 1) into consideration. }

% \If{no binary tables are left in $G$}{
%     Perform joins in any order
    
%     \Return join result over $G$
% }
% Pick any table that has only two columns (original table);

% Perform a light join adjacent to the selected table, performing the join and updating $G$\;
% \textbf{Join}($G$)\;
\For{each part $I_i \in \mathcal{I}$}{
  compute $Q(I_i)$ using a skew-aware query optimizer
}
\Return $\bigcup_i Q(I_i)$
\end{algorithm}

% \vspace{-0.5em}
% \begin{figure}[H]
%   \centering
%   \includegraphics[width=0.6\linewidth]{figures/draft_split_planner.jpg}
%   \captionsetup{font=small,skip=3pt}
%   \caption{System Design Overview}
%   \label{fig:performance_gain_wgpb}
% \end{figure}
% \vspace{-2em}

\section{Theoretical Motivation}  
\label{sec:theory}

We next present an instantiation of our splitting framework for binary relations that is provably worst-case optimal. In particular, it achieves the AGM bound $\mathcal{AGM}(Q) = N^\rho$, where $\rho$ is the minimum fractional edge cover of the query graph. We next discuss how to instantiate the split and join phase to achieve the desired bound.

% The query plans used by this algorithm are similar to those used by our system, which provides theoretical motivation.

\paragraph{Split Phase} We pick a split set $\Sigma$ as follows. For every relation $R$, we choose either one of its attributes (say $A$), and add $(\{R\},A)$ to the split set. In other words, we split every relation using one of its attributes -- this means we will end up with $2^\ell$ subinstances ($\ell$ is the number of relations). Then, we choose $A_L = \{a \mid d_R(a) \leq \sqrt{N} \}$ and $A_H = \{a \mid d_R(a) > \sqrt{N} \}$. In other words, we split the values according to their degree in $R$ and a threshold $\tau = \sqrt{N}$. This threshold choice means that if we split a relation $R(A,B)$ into $R_L, R_H$ using attribute $A$, the degree of any value of $A$  will be light in $R_L$ (smaller than $\sqrt{N}$), while the degree of any value of $B$ will be light in $R_H$. We can depict this visually by thinking the query graph as a directed graph, where an edge is directed away from the light attribute. Since we split every relation, each subinstance corresponds to one way of making the query graph directed.

\paragraph{Join Phase} We now discuss how we pick a good join order for a subinstance $I_i$. Recall from above that we can think of this subinstance as a directed query graph $G_i$ of $Q$. The key intuition we want to follow is: {\em try to join on light attributes as much as possible}. 
We define this formally via an operation called a {\em light join}. Given an intermediate result $T(X_1, \dots, X_k)$ in a query plan, a light join $T(X_1, \dots, X_k)\bowtie R(X_k,Y)$ is a join where the edge $(X_k, Y)$ is in $G_i$, i.e. $X_k$ is light in relation $R$.

In the below pseudocode, we describe the join order. Intuitively, we do the following: pick any relation to start with (Line 2), and perform as many light joins as possible (Lines 5-7). If no light joins are possible, start over somewhere else and repeat. When there are two intermediate results that reach each other, they are merged (Lines 8-11). When we are left with one intermediate result, the query output is obtained (Line 14).

The algorithm below works on queries that are connected, i.e. do not contain cartesian products. Cartesian products are handled by first computing each connected sub-query.
\begin{algorithm}
\caption{Join Ordering}
\label{algorithm:BinaryJoin}
\KwData{$Q$: Query, $I_i$: Input}
$\mathcal{C} \gets \{\}$ ; \quad// no intermediate results yet \\
\While{$\exists$ attribute $X$ not in any set in $\mathcal{C}$ s.t. an unused relation $R(X,Y)$ is light in $X$}{
$c \gets \{X, Y\}$ \;
$T_c \gets R(X,Y)$ \;
\While{$\exists$ light join between $T_c$ and an unused relation $R(X,Y)$ with $X \in c$}{
$T_{c \cup \{Y\}} \gets T_c \bowtie R(X,Y)$ \;
$c \gets c \cup \{Y\}$ \;
}
\While{$\exists c' \in \mathcal{C}$ such that $c \cap c' \neq \emptyset$}{
$T_{c \cup c'} \gets T_c \bowtie T_{c'}$ \;
$c \gets c \cup c'$ \;
$\mathcal{C} \gets \mathcal{C} - \{c'\}$ \;
}
$\mathcal{C} \gets \mathcal{C} \cup c$
}
// $\mathcal{C} = \{c\}$, i.e. it has a single component \\
\Return $T_c$ \;
\end{algorithm}

\begin{example}\label{example:TheoryAlgExample}
    Consider the following query
    \begin{align*}
    Q =& 
    R_1(A,B) \bowtie
    R_2(B,C) \bowtie
    R_3(A,C) \bowtie
    R_4(C,D) \bowtie \\
    &R_5(A,D) \bowtie
    R_6(D,E) \bowtie
    R_7(E,K) \bowtie 
    R_8(B,F) \bowtie \\
    &R_9(F,G) \bowtie
    R_{10}(G,H) \bowtie
    R_{11}(H,I) \bowtie
    R_{12}(H,J) \bowtie \\
    &R_{13}(I,K) \bowtie
    R_{14}(J,K) \bowtie
    R_{15}(C,I) \bowtie
    R_{16}(B,H)
    \end{align*}

    The query graph is depicted in the figure below. The color/number indicates which intermediate relation a relation is first joined into by line 5-7.
    \begin{figure}[t]
    \begin{center}\begin{tikzpicture}[every node/.style={circle, draw}]
        \draw[red,rounded corners=10pt,thick] (-0.7,-2.2) rectangle (2.2,2.2);
        \node[draw=none] at (-0.4,1.9) {\textbf{1}};
        \draw[blue,rounded corners=10pt,thick] (2.3,-0.7) rectangle (5.2,2.2);
        \node[draw=none] at (2.6,1.9) {\textbf{2}};
        \draw[green,rounded corners=10pt,thick] (0.8,2.3) rectangle (3.7,3.6);
        \node[draw=none] at (1.1,3.3) {\textbf{3}};
        \draw[orange,rounded corners=10pt,thick] (2.3,-0.8) rectangle (3.7,-2.2);
        \node[draw=none] at (2.6,-1.1) {\textbf{4}};
        \node[draw=none] at (1.1,3.3) {\textbf{3}};
        \node[fill=white] (a) at (0,0) {A};
        \node[fill=white] (b) at (1.5,1.5) {B};
        \node[fill=white] (c) at (1.5,0) {C};
        \node[fill=white] (d) at (1.5,-1.5) {D};
        \node[fill=white] (e) at (3,-1.5) {E};
        \node[fill=white] (f) at (1.5,3) {F};
        \node[fill=white] (g) at (3,3) {G};
        \node[fill=white] (h) at (3,1.5) {H};
        \node[fill=white] (i) at (3,0) {I};
        \node[fill=white] (j) at (4.5,1.5) {J};
        \node[fill=white] (k) at (4.5,0) {K};
        \draw[thick,->,red] (a) -> (b);
        \draw[thick,->,red] (b) -> (c);
        \draw[thick,->,red] (c) -> (a);
        \draw[thick,->,red] (c) -> (d);
        \draw[thick,->,red] (d) -> (a);
        \draw[thick,->,green] (f) -> (b);
        \draw[thick,->,orange] (e) -> (d);
        \draw[thick,->,orange] (e) -> (k);
        \draw[thick,->,green] (f) -> (g);
        \draw[thick,->,green] (g) -> (h);
        \draw[thick,->,blue] (k) -> (i);
        \draw[thick,->,blue] (i) -> (h);
        \draw[thick,->,blue] (h) -> (j);
        \draw[thick,->,blue] (j) -> (k);
        \draw[thick,->,blue] (h) -> (b);
        \draw[thick,->,blue] (i) -> (c);
    \end{tikzpicture}\end{center}
    \caption{Example query graph and initial intermediate relations}
    \end{figure}
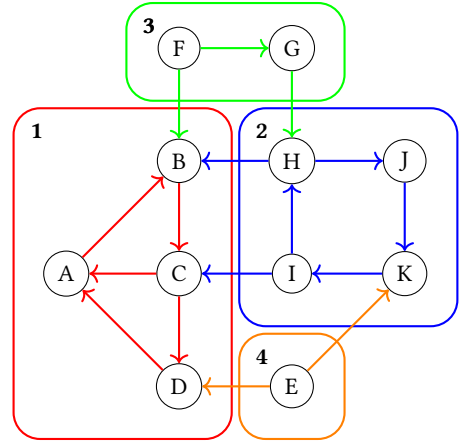
    Suppose the algorithm start doing light joins from $A$. This will yield an intermediate relation with the red (1) relations, since this intermediate has no outgoing edges (light joins).
    Suppose that next, the algorithm starts expanding from $H$. It will expand an intermediate relation of the blue (2) relations. The red and blue intermediates will then be merged.
    Continuing, the green (3) intermediate is computed and merged with the red/blue intermediate. Finally, the same happens with the orange (4) component.
\end{example}

We prove the following theorem in Appendix \ref{sec:TheoryAppendix}.

\begin{restatable}[]{theorem}{TheoryAlgMatchAGM}
Let $Q$ be a natural join query with binary relations of at most size $N$. Then, the above instantiation of the splitting framework runs in time at most $\mathcal{AGM}(Q) = O(N^{\rho})$.
\end{restatable}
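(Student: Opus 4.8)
The plan is to analyze the algorithm one subinstance at a time. Fix a subinstance $I_i$, corresponding to a directed query graph $G_i$ in which every edge points away from the attribute that is light in the corresponding relation (degree $\le \sqrt N$ on the tail). Since disconnected components are handled independently and the AGM bound is multiplicative over connected components, we may assume $G_i$ is connected. The key quantity to track is, for each intermediate relation $T_c$ built by the algorithm on attribute set $c$, the ratio between $|T_c|$ and the AGM bound of the subquery induced by $c$. I would prove by induction on the construction (the two inner while-loops) that every intermediate result $T_c$ produced by Algorithm~\ref{algorithm:BinaryJoin} satisfies $|T_c| \le \mathcal{AGM}(Q[c])$ where $Q[c]$ is the natural join of the relations whose edges lie inside $c$, and moreover that $T_c = Q[c](I_i)$. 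The final return value is then bounded by $\mathcal{AGM}(Q) = N^\rho$, and summing over the $2^\ell$ subinstances only costs a constant factor since $\ell$ is fixed.

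The two growth mechanisms to bound are: (i) a \emph{light join} $T_{c\cup\{Y\}} = T_c \bowtie R(X,Y)$ with $X$ light in $R$, and (ii) a \emph{merge} $T_{c\cup c'} = T_c \bowtie T_{c'}$ of two intermediates sharing at least one attribute. For (i), since $X$ is light in $R$, every tuple of $T_c$ matches at most $\delta_R(X) \le \sqrt N$ tuples of $R$, so $|T_{c\cup\{Y\}}| \le \sqrt N \cdot |T_c|$; I then need that adding the edge $(X,Y)$ to the optimal fractional cover of $Q[c]$ (giving that edge weight... actually weight chosen so the new vertex $Y$ is covered) increases the AGM bound by exactly a factor of at most $N^{1/2}$, which holds because we can extend the cover of $Q[c]$ by putting weight $1/2$ on edge $(X,Y)$ and, crucially, the light orientation lets us "pay" for $Y$ with only half an edge while $X$ is already covered. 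For the merge (ii), I use that for a shared attribute $Z \in c \cap c'$, the fractional-cover duality gives $|T_c \bowtie T_{c'}| \le |T_c|\cdot|T_{c'}| / \max(1, \text{something})$ — more carefully, a Loomis–Whitney / Friedgut-type inequality or just the fact that $\mathcal{AGM}(Q[c]) \cdot \mathcal{AGM}(Q[c'])$ upper-bounds $\mathcal{AGM}(Q[c \cup c'])$ when the covers can be combined, and correctness of the merge is immediate since it is just a join of two correctly-computed subquery answers.

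The main obstacle I anticipate is the cover-combination bookkeeping: I need to show that the fractional edge covers built incrementally along the algorithm's execution actually assemble into a valid fractional edge cover of the full query graph with total weight $\le \rho$ (equivalently, total AGM exponent $\le \rho$), despite the fact that the algorithm makes greedy, local choices and that merges can glue together components built from unrelated starting points. The cleanest route is probably to argue structurally: the edges used as "light joins" (Lines 5--7) form, within each component $c$, a structure where every vertex except the starting one is the head of exactly one such edge — a branching/arborescence — so weight $1/2$ on each of these edges covers every vertex of $c$ at "cost" (number of vertices of $c$)$\times \tfrac12$; combined with the observation that every relation of the original query is used exactly once (the outer while-loop terminates only when $\mathcal{C}$ is a single component containing all attributes, and each relation is consumed at most once), the union of these per-component half-weight assignments is a fractional edge cover of $G_i$, hence has weight $\ge \rho$, but I need the reverse — so instead I track the AGM bound of each $T_c$ directly as above and show the light-join and merge steps never let $|T_c|$ exceed $\mathcal{AGM}(Q[c])$, with the final bound $\mathcal{AGM}(Q) = N^\rho$ following from the standard AGM inequality applied to the complete intermediate $T_c$ with $c$ = all attributes. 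Correctness (that the union of the $Q(I_i)$ equals $Q(I)$) is routine: the splits partition the active domain of each split attribute, so the subinstances partition the tuples that could contribute, and natural join distributes over this partition.
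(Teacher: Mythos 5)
There is a genuine gap, and it sits exactly where the paper itself says the difficulty is: the merge step. Your high-level skeleton (analyze each of the $2^\ell$ subinstances separately, observe that a light join grows an intermediate by at most $\sqrt{N}$) matches the paper's, but your inductive invariant $|T_c| \leq \mathcal{AGM}(Q[c])$ does not propagate the way you claim. For the light-join step you need $\sqrt{N}\cdot \mathcal{AGM}(Q[c]) \leq \mathcal{AGM}(Q[c\cup\{Y\}])$, which is false in general: take $Q[c]$ to be the path $A\!-\!B\!-\!C$ (minimum fractional edge cover $2$, so $\mathcal{AGM}=N^2$) and attach a pendant light edge $(C,Y)$; the extended path still has cover number $2$, so $\mathcal{AGM}(Q[c\cup\{Y\}])=N^2 < N^{2.5}$. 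Relatedly, putting weight $1/2$ on the new edge does not produce a valid fractional edge cover of $Y$ (a cover needs total weight $\geq 1$ per vertex). The paper avoids this by tracking the sharper quantity $|T_c|\leq N^{|c|/2}$ directly (Lemma~\ref{lemma:LightExpansionAGM}) and only at the end comparing $N^{k/2}$ to $\mathcal{AGM}(Q)$ via the all-$1/2$ fractional \emph{vertex packing} (the LP dual of the edge cover), which is feasible for any graph.

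The larger problem is the merge. The bound $|T_c\bowtie T_{c'}|\leq |T_c|\cdot|T_{c'}|\leq \mathcal{AGM}(Q[c])\cdot\mathcal{AGM}(Q[c'])$ goes the wrong way: combining covers of the two pieces shows $\mathcal{AGM}(Q[c\cup c'])\leq \mathcal{AGM}(Q[c])\cdot\mathcal{AGM}(Q[c'])$, not the reverse, and the product can strictly exceed $\mathcal{AGM}(Q)$ (e.g., two overlapping three-attribute components give $N^{3}$ against a five-attribute bound of $N^{5/2}$). Your paragraph on ``cover-combination bookkeeping'' recognizes the issue but ends by restating the desired invariant rather than proving it. The paper closes this gap with two separate tools: fully merged intermediates are \emph{induced subgraph queries}, whose AGM bound never exceeds that of $Q$ (Lemma~\ref{lemma:InducedSubgraph}, via extending a vertex packing by zeros); and the intermediates formed while a new component grows into attributes already owned by earlier components are bounded by Lemma~\ref{lemma:MergeStepFine}, which constructs an explicit extremal instance $I'$ (product relations $[\sqrt{N}]\times[\sqrt{N}]$ outside $U$, degree-preserving gadgets on the crossing light edges) and then invokes the AGM inequality on $Q(I')$. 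Nothing in your proposal substitutes for that construction, so as written the proof does not go through.
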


We give some intuition behind the theorem. For any intermediate that is created only with light joins, it is never bigger than $\mathcal{AGM}(Q)$, because light joins grow the intermediate very slowly (the intermediate grows by at most a factor $\sqrt{N}$ for every light join). For queries over binary relations, this is enough to match $\mathcal{AGM}(Q)$. The trickier part is understanding why the merge step matches the AGM bound. It is because the intermediate results have a form that is close to \textit{induced subgraph form}, which is discussed more in the appendix.

\paragraph{Discussion}
This section has proven that there is a large class of query plans with splits that match the AGM bound. Indeed, several join orders are produced from Algorithm~\ref{algorithm:BinaryJoin}, since we are free to choose what is our starting point and which light join we do first.
In practice, these theoretically optimal plans may not be desirable --- they split every table, which is not necessary and expensive in practice. In the next section, we will see how we can split more carefully. Additionally, we would like to give the query optimizer more freedom in picking among join orders during the join phase. However, we want to keep the same intuition: light joins are less likely to grow an intermediate result too much and should be given preference. One alternative interpretation of our theoretical result is that by splitting, the query optimizer is much more likely to pick a good join order for each subinstance, since the join orders that satisfy the AGM bound cover a much larger space.

\section{SplitJoin Design Space} \label{sec:opt}

% - how to split 
%     + split threshold: K instead of sqrt(N). requires preprocessing.
%     + filter-aware (triangle)
%     + split on join instead of table 

% - How to join  
%     + join order (DP + K)

% In this section, we introduce several optimizations that enhance the efficiency of the basic algorithm described earlier.
While the theoretical algorithm described in Section~\ref{sec:theory} ensures worst-case optimality, it may not lead to the best performance in practice. In this section, we explore the design space of \sysname{} and propose heuristics that can lead to good practical performance. 
%Specifically, we will discuss possible implementations of the two functions in Algorithm~\ref{algo:split}: \textit{chooseSplitThreshold} (Section~\ref{sec:split_threshold}), and \textit{chooseSplitSet} (Section~\ref{sec:split_schedule}), as well as the implementation of the join phase.

\subsection{Co-split}
The first design decision of \sysname{} is to pick splits of the form $(\{R,T\},A)$, i.e., we always split two relations at once. To explain why this is beneficial compared to splitting one relation at a time, consider a query $Q$ that has two tables $R(A,B), T(A,C)$ that join on attribute $A$.  If we choose to split $R$ on attribute $A$, then we create two subrelations $R_L$ and $R_H$. Now observe that since $R,T$ join on attribute $A$, this means that in the subinstances $I_L, I_H$ it suffices to keep only the parts of $T$ that join with $R_L, R_H$ respectively:
\[
\begin{aligned}
T_L(A,C) &\gets T \Join \pi_A(R_L) \\
T_H(A,C) &\gets T \Join \pi_A(R_H)
\end{aligned}
\]

In other words, a split on $R$ can be thought as a split on $T$ as well; this can reduce the input size for each join by ensuring that related tuples are split consistently. The co-split strategy makes this explicit. One additional consideration is that we want to avoid splitting both $R$ and $T$ on attribute $A$ independently, since this would generate four subproblems: $R_L \bowtie T_L$, $R_L \bowtie T_H$, $R_H \bowtie T_L$, and $R_H \bowtie T_H$.  
%However, with co-splitting, we only consider $R_L \bowtie T_L$ and $R_H \bowtie T_H$, eliminating the cross-partition branches $R_L \bowtie T_H$ and $R_H \bowtie T_L$.

We can visualize co-splits alternatively by considering the {\em join graph} of a query $Q$. This graph can be thought as the "dual" of the query graph: each vertex is a relation, and two vertices (relations) are connected via an edge only if they join on some attribute.  Figure~\ref{fig:case_study} shows as example the join graph of query $Q_5$ (defined in Section~\ref{sec:setup} and Figure~\ref{fig:queries}). A co-split then corresponds to an edge of the join graph. For instance, the co-split $(\{R_1, R_5\},Y)$ corresponds to the edge $\{R_1,R_5\}$ of the join graph in our running example. It will be convenient to use the shorthand $R_1 \Join_Y R_5$ for a co-split.

\begin{figure}[t]
  \centering
  \includegraphics[width=\linewidth]{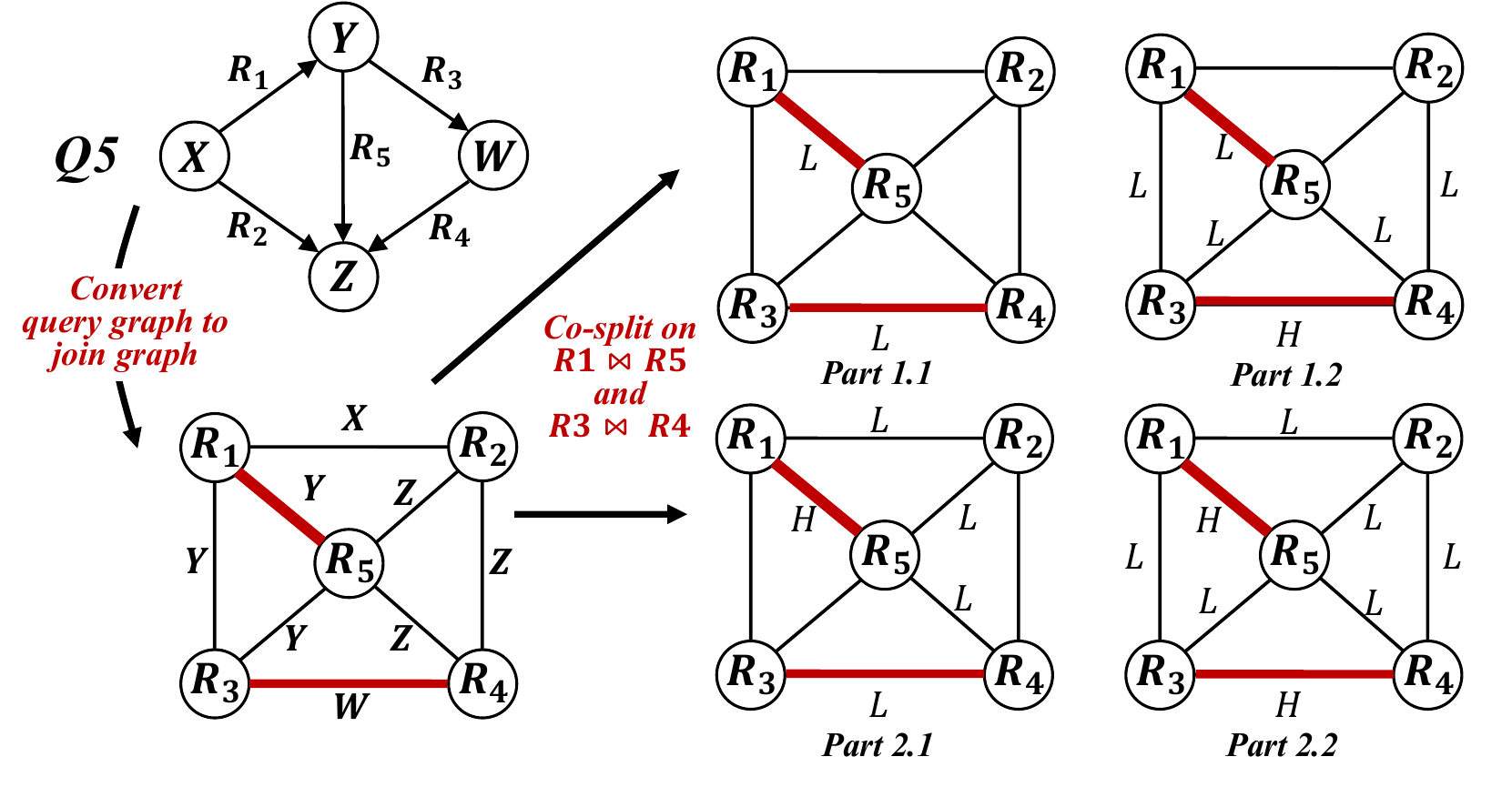}
  \captionsetup{font=small,skip=3pt}
  \caption{Split Example on Q5, L/H indicates a join is Light/Heavy.}
  \label{fig:case_study}
\end{figure}

\subsection{Choosing the Split Threshold} \label{sec:split_threshold}

% - Since on most of the datasets, replications are much less than $\sqrt N$ (use a figure to depict this), we use thresholds based on real distributions instead of $\sqrt N$

In this part, we discuss how we implement the method \textit{splitAttribute}. We will start by presenting the method for a single split, and then extend this to a co-split. 

\paragraph{Single Split} Suppose we want to split the values of attribute $A$ in a relation $R(A,B)$ of size $N$ into a heavy part $A_H$ and a light part $A_L$. Recall the {\em degree} defined in Section~\ref{sec:background}, we pick a {\em split threshold} value $\tau >0$ such that a value $a$ with $ d_R(a) > \tau$ is sent to the heavy part, and with  $ d_R(a) \leq  \tau$ to the light part. The key question is: {\em how do we choose the right split threshold?}

To answer this question, we need to dive deeper into what is achieved by splitting a relation into two parts. 

\begin{itemize}
    \item \textbf{Light part ($A_L$):} The maximum degree of any value of attribute $A$ is less than $\tau$. This implies that if we join any relation with $R_L$ using $A$ as the join key, the intermediate result will increase by a factor of at most $\tau$.
    \item \textbf{Heavy part ($A_H$):} The main observation is that the number of heavy values of $A$ is at most $N/\tau$. Thus, the maximum degree of attribute $B$ in $R_H(A,B)$ is at most $N/\tau$. This similarly implies that joining with $R_H$ on $B$ will increase the intermediate result by a factor of at most $N / \tau$.
\end{itemize}

Theoretically, we want to balance the two parts, so we would choose $\tau$ such that $\tau = N/\tau$, or equivalently $\tau = \sqrt{N}$ as in Section~\ref{sec:theory}. However, this threshold matches only the worst-case bound. In practice, the bound of $\sqrt{N}$ for the number of heavy values can be overly conservative, as the actual degree distribution on real-world datasets often deviates from the theoretical worst-case assumptions.
Figure~\ref{fig:distribution} illustrates the degree distributions from several real-world network datasets, which are the datasets in Table~\ref{tab:datasets}. As shown in the figure, the vast majority of values have degrees much less than $\sqrt{N}$. Therefore, by replacing the $\sqrt N$ threshold with a smaller value $K$, we can reduce the overhead caused by values with degree less than $\sqrt N$ while preserving the worst-case optimality.

\begin{figure}[t]
  \centering
  \includegraphics[width=\linewidth]{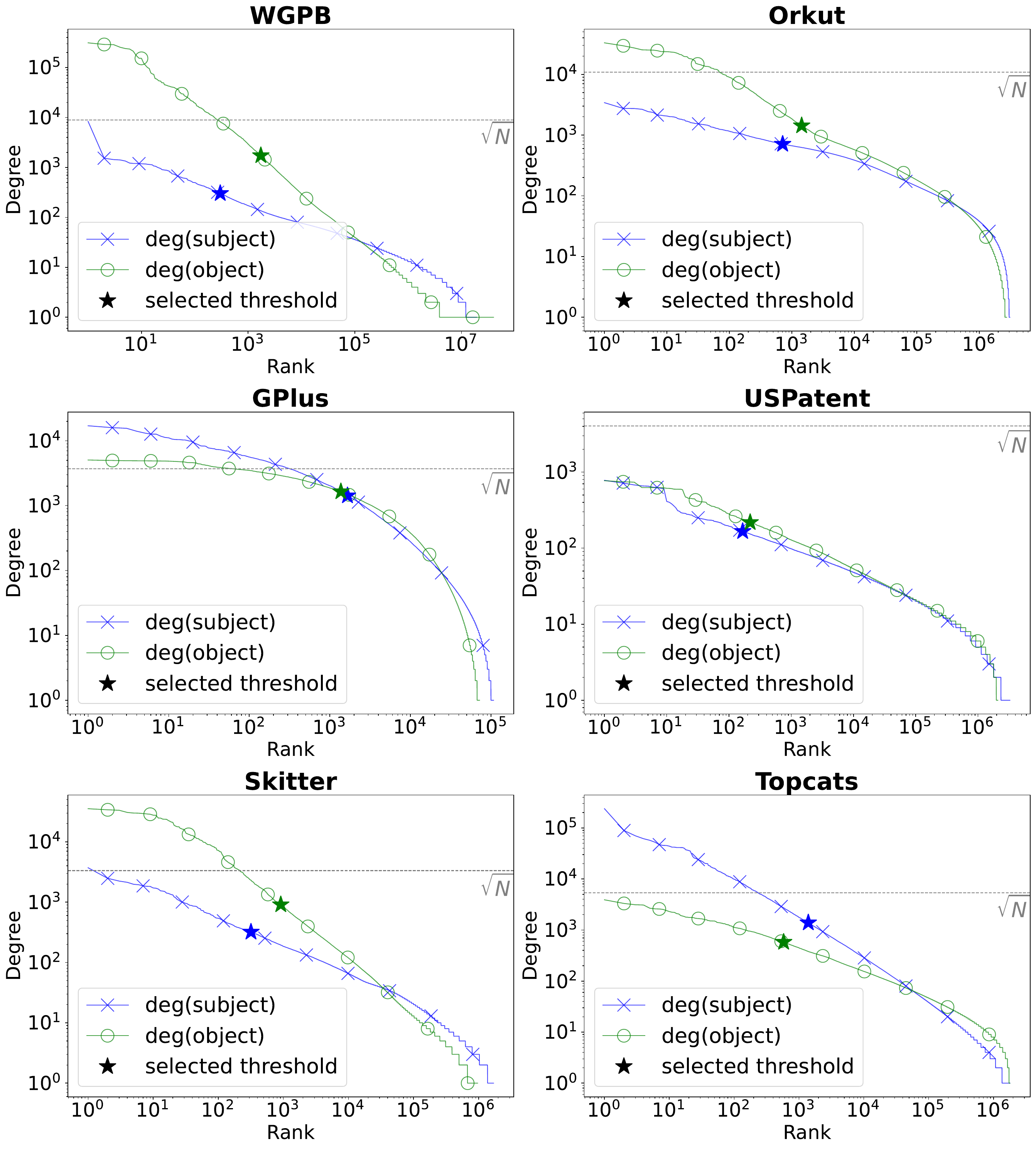}
  \captionsetup{font=small,skip=3pt}
-  \caption{Degree distributions of six datasets. Nodes are ordered by descending degree, with the subject attribute and the object attribute distributions shown on a log–log scale.}
  \label{fig:distribution}
\end{figure}

%To improve split precision, we adopt a data-driven strategy. To better illustrate our strategy, we first revisit the rationale behind the original $\sqrt{N}$ threshold, which aims to limit the degree of replication after $R(a,b)$ being split: assuming that we split $R(a,b)$ into $R_L,R_H$ based on attribute $a$:

 We now discuss our method to choose the threshold $\tau$. First, we compute the {\em degree sequence}~\cite{DBLP:journals/pacmmod/KhamisNOS24} of $A$ in $R(A,B)$. The degree sequence is constructed by calculating the degree $d_R(a)$ of every value $a$ in $A$, and sorting them in non-increasing order: 
$$\deg_1 \ge \deg_2 \ge \deg_3 \ge \dots \ge \deg_m$$
Here, $m$ is the number of distinct values in $A$. We want to guarantee that we pick $\tau$ such that the number of values with degree $>\tau$ are at most $\tau$. Hence, we pick $\tau$ to be the first index $K \in \{1, \dots, m\}$ in the degree sequence such that $K \geq \deg_K$. Observe that this choice of threshold is sensitive to the specific data distribution, in contrast to the choice of $\sqrt{N}$ which is data-independent. In particular, if the degree distribution is highly skewed (say a zipfian distribution), then $\deg_K$ will be significantly smaller than $\sqrt{N}$, leading to a more precise and effective splitting. Figure~\ref{fig:distribution} shows the chosen threshold of our heuristic compared to the choice of $\sqrt{N}$. Observe that the chosen threshold is smaller than $\sqrt{N}$, which means that we are much more conservative in what we consider a "light" value.

%In other words, we define the top-$K$ highest-degree $a$-values as \emph{heavy}, and the remaining as \emph{light}. This way, $K$ not only controls the number of heavy keys but also implicitly defines the threshold value for the split, most importantly, $K$ chosen by this strategy is sensitive to the data distribution.

%\begin{itemize}
%    \item It ensures that only $K$ distinct $a$-values fall into the heavy side $R_H$, which bounds the replication overhead.
%\end{itemize}

% We show the time consumption of our threshold compared to other values in Figure xxx.

The split is designed to handle skew by partitioning tuples into heavy and light sides, but this process is not free, since it involves multiple overheads: $(i)$ the cost of partitioning a relation into heavy and light, and  $(ii)$  the cost of executing additional join branches for each partition. When the output size of a join is already small, the cost of these extra steps can easily outweigh any potential performance gains from splitting.
To address this issue, we tweak the threshold choice in one more way. Specifically, we pick two predefined parameters $\Delta_1, \Delta_2$, and check whether:
$$ \deg_1  / \Delta_1 \leq K \leq \Delta_2$$
If this condition is satisfied, then we put everything in the light part (this is equivalent to setting the threshold $\tau = +\infty$).
%for each candidate join selected for splitting, we perform two checks:  (1) whether its computed threshold is below a predefined value, and  (2) whether the ratio between its maximum degree and the threshold is also below another predefined value.  
Intuitively, the second inequality tells us that there are very few heavy values, and the first one that these few heavy values are not too skewed. In this case, partitioning is unnecessary because of the overheads so we opt to skip it. We choose $\Delta_1 = 5$ and $\Delta_2 = 240$ in our experiments.

%\simon{This is a small detail, but how does the above play with choosing the split set? If we find a split that we decide not to do because of the above, does that split count as "covering", or do we have to try other splits to cover those relations?}

\paragraph{Co-Split} Suppose now we want to choose the threshold for a co-split $R \Join_A T$. Instead of constructing two degree sequences, we construct a combined degree sequence by considering the combined degree $d_{R,T}(a)=\min \{d_R(a),d_T(a)\}$. We similarly choose the threshold $K$ to be the smallest index such that $K \geq d_{R,T}(a)_K$, and also use $\Delta_1, \Delta_2$ to further optimize the threshold choice exactly as in the case of a single split.  

Moreover, it is also important to understand how heaviness distributes across related joins. As shown in Figure~\ref{fig:case_study}, given that a join is heavy, we can infer that its adjacent joins involving different attributes are light, since the number of distinct attributes in those joins is limited. For example, in part 1.2, if $R_3 \bowtie R_4$ is identified as heavy, then $R_3 \bowtie R_1$, $R_3 \bowtie R_5$, $R_4 \bowtie R_5$, and $R_4 \bowtie R_2$ can be inferred to be light. This inference helps us reason about the broader join structure and identify which parts of the query are likely to contribute less to the overall cost.

\subsection{Choosing the Split Set} 
\label{sec:split_schedule}

We next turn to the problem of deciding the split set. Recall that our split set consists of co-splits, which can be viewed as edges in the join graph. In order to minimize partitioning, we choose the edges such that they form an {\em edge packing} of the graph (i.e., we cannot choose two edges that are adjacent to the same table). This guarantees that each relation will be split at most once. We will say that an edge $\{R,S\}$ in the join graph is {\em uncovered} by a split set $\Sigma$ if $R,S$ do not occur in $\Sigma$.

Specifically, we construct all possible co-split sets through a recursive enumeration procedure, $\textsf{enum}(\Sigma)$. Initially, we start with $\Sigma = \emptyset$, so we call $\textsf{enum}(\emptyset)$. At each recursive step, to compute $\textsf{enum}(\Sigma)$, we first generate all $\Sigma \cup \{R \Join_A S\}$ such that the new co-split ${R \Join_A S}$ satisfies two conditions:
$(i)$ ${R, S}$ is an uncovered edge in the join graph w.r.t. $\Sigma$; and
$(ii)$ there is no other uncovered edge whose two relations belong to a smaller cycle in the query graph.
The function then returns the union of the split sets produced recursively by each $\textsf{enum}(\Sigma \cup \{R \Join_A \})$. The latter condition makes sure that we prioritize edge packings that contain edges in smaller cycles. When $\Sigma$ is an edge packing (i.e., there are no more uncovered edges), we return $\textsf{enum}(\Sigma) = \{\Sigma\}$ and the recursion terminates.

\begin{example}
Consider again $Q_5$ (Figure~\ref{fig:case_study}).
The split set construction prioritizes co-splitting joins within the sub-cycles $R_1(X,Y) \bowtie R_2(X,Z) \bowtie R_5(Z,Y)$ and $R_5(Z,Y) \bowtie R_3(Y,U) \bowtie R_4(U,Z)$, rather than on the joins $R_1 \bowtie R_3$ and $R_2 \bowtie R_4$ that occur in a cycle of length 4. The reason we want to do this is that the former joins each contain a filter-table, allowing the filter effect to be applied immediately during splitting, whereas the latter do not provide this benefit. Hence, we will obtain the following possible co-split sets from the enumeration procedure:
\allowdisplaybreaks
\begin{align*}
   \Sigma_1 = &  \{R_1 \Join_Y R_5, R_3 \Join_W R_4\} \\
   \Sigma_2 = & \{R_2 \Join_Z R_5, R_3 \Join_W R_4\} \\
    \Sigma_3 =& \{R_1 \Join_X R_2, R_3 \Join_W R_4\} \\
   \Sigma_4 = & \{R_1 \Join_X R_2, R_3 \Join_Y R_5\} \\
   \Sigma_5 = & \{R_1 \Join_X R_2, R_4 \Join_Z R_5\}
\end{align*}
Note that the co-splits $R_1 \Join_Y R_3$ and $R_2 \Join_Z R_4$ will never be chosen by our construction.
\end{example}

Given all the generated split sets, the question is now: which set do we choose for splitting? Our strategy is to prioritize splitting on \emph{light joins}, i.e., joins whose threshold is small.

Technically, we will assign a {\em cost} to each edge in the join graph: the cost is the threshold $k$ we choose for this co-split. The key idea here is that the intermediate size of the join is bounded by $k \cdot N$, where $N$ is the size of the largest relation involved. The cost of a set of co-split is simply the maximum of the thresholds of each co-split. Therefore, choosing the joins with the smallest threshold yields the tightest upper bound on intermediate result size, and is expected to produce the most efficient plan.

\begin{example}
Consider again the running example for $Q_5$, and assume that the threshold for the co-split $R_i \Join R_j$ is $k_{ij}$. Then the cost of $\Sigma_1$ would be $\max\{k_{15}, k_{34}\}$, the cost of $\Sigma_2$ would be $\max\{k_{25},k_{34}\}$, and so on. In the case of Figure~\ref{fig:case_study}, the smallest cost is $\Sigma_1$, so we would pick this co-split to partition the instance into the four subinstances than can be seen in the figure.

\end{example}

\subsection{Split-aware Query Optimizer}

Conventional query optimizers in existing database systems are generally unaware of splits. For example, when evaluating triangle queries in DuckDB or Umbra, the optimizer only considers standard join ordering and cost models, without explicitly accounting for the structural changes introduced by splits. Besides, these query optimizers do not utilize degree information. As a result, the execution strategies are often suboptimal, and the potential benefits of splitting cannot be fully exploited. 

Recent work~\cite{LpBound} incorporates degree information into cost estimation, but its approach relies on solving complex linear programs and assumes precise degree distributions, which limits its practicality for general-purpose optimizers. In particular, it incurs high estimation overhead and cannot handle nested queries, which are common in our split-based plans. Instead, we adopt a simpler yet effective heuristic that directly leverages observed maximum degrees and thresholds to guide the cost model.

Building on this idea, we design a simple split-aware optimizer that explicitly incorporates degree information into query planning. In particular, our approach estimates the cost of query 
plans by combining the maximum degree of attributes before splitting with the thresholds determined after splitting. This enables the optimizer to better capture the effect of splits and to generate more efficient execution plans. As an example, if we have partitioned $R(A,B)$ on attribute $A$ with threshold $K$, the cost of the join $T(A,C) \bowtie R_L(A,B)$ can be estimated as $|T| \cdot K$ (since the maximum degree of $A$ in $R$ is at most $K$). Similarly, the cost of the join $S(B,C) \bowtie R_H(A,B)$ is at most $|S| \cdot K$ (since there are at most $K$ heavy values of $A$). 

Apart from this modification, the overall optimization procedure still follows the conventional 
dynamic programming (DP) framework used in query optimizers. That is, plan enumeration and 
cost-based pruning remain unchanged, while only the cost estimation step is extended to be aware of splits using the maximum degree and thresholds. This design ensures compatibility with existing optimizers while enabling them 
to effectively leverage the advantages introduced by splitting.

\section{Experimental Evaluation}\label{sec:exp}

In this section, we present experimental results to demonstrate the effectiveness and advantage of \sysname{}. We begin by briefly describing our implementation in ~\ref{sec:impl}, then state the experimental setup in Section~\ref{sec:setup}, followed by evaluations of overall performance in Section~\ref{sec:overall}, then demonstrate our proposed optimizations by effectiveness studies in Section~\ref{sec:effective}.

\subsection{Implementation}\label{sec:impl}

\sstitle{Preprocessing} To determine the thresholds required for splits, we first collect degree information by using aggregate clause 
for each table in the input query. For each column of every relation, we construct a compact summary table consisting of pairs $(\text{value}, \text{degree})$, where \texttt{degree} denotes the number of occurrences of a given attribute value. To keep the overhead manageable, we only record the at most top 100K values with the highest degrees. This summary table is typically very small, amounting to less than 0.1\% of the size of the original relation, while still being sufficient to capture the skew patterns that drive effective splitting decisions. 

\sstitle{Front-end Layer for Splits} To integrate \sysname{} into existing systems with minimal modification, we designed a lightweight
front-end layer that operates independently of the underlying database engine. This layer takes a join query as input and rewrites it into an optimized SQL query that incorporates split operators before passing it to the database. In addition, this front-end layer can leverage the underlying execution engine to collect the statistical information required by preprocessing, ensuring that degree summaries are obtained efficiently without adding additional data management components.  
The approach is simple to implement and nonintrusive, as it does not require altering the executor internals.

\subsection{Experimental Setup}\label{sec:setup}
% - Follow Honeycomb, we use queries above 6 common social network datasets.

\sstitle{Hardware Platform} 
We conduct all experiments on a CloudLab~\cite{cloudlab} machine with r6525 instance type with two AMD EPYC 7543 processors 
(32 cores per socket, 2 hardware threads per core, totaling 128 hardware threads) 
running at 2.8\,GHz, equipped with 251\,GB DDR4 memory. 
The used SSD is a Dell Ent NVMe AGN MU U.2 with 1.6\,TB capacity.

\sstitle{Tested Dataset} 
The datasets used in our experiments are listed in Table~\ref{tab:datasets}. 
They are the same graph datasets evaluated in~\cite{DBLP:conf/sigmod/WuWZ22}, 
covering a range of sizes, sparsity levels, and degree distributions.

\begin{table}[ht]
\centering
\begin{tabular}{lrrl}
\hline
Name & Nodes & Edges & Features\\ 
\hline
WGPB~\cite{WGPB2022,wgpb2} & 54.0M & 81.4M & skew, sparse \\
Orkut~\cite{10.1145/1298306.1298311} & 3.07M & 117M & uniform, partial dense \\
GPlus~\cite{10.5555/2999134.2999195}& 107K & 13.6M & skew, dense\\
USPatent~\cite{10.1145/1081870.1081893} & 3.77M & 16.5M & uniform, sparse\\
Skitter~\cite{10.1145/1081870.1081893} & 1.69M & 11.1M & partial skew, sparse\\
Topcats~\cite{10.1145/3097983.3098069} & 1.79M & 28.5M & skew, partial dense\\
\hline
\end{tabular}
\caption{Sizes of the tested graph datasets.}
\label{tab:datasets}
\end{table}

For the WGPB dataset, which originally contains three columns, we discard the second column as it is not used as a join key in our queries, and remove duplicates from the remaining two columns (about 1\% reduction regarding the number of rows).

\sstitle{Tested Queries}
The queries we used are listed in Figure ~\ref{fig:queries}, which are cyclic queries with a number of tables less than nine from the paper~\cite{10.14778/3342263.3342643}. Additionally, we add the 5-cycle query.

\begin{figure}[htbp]
\centering
\begin{tikzpicture}[x=1.25cm,y=1.25cm]

\begin{scope}[shift={(0,0)}]
  \node (a1) [vtx] at (0,0.5)  {a_1};
  \node (a2) [vtx] at (0.5,0)  {a_2};
  \node (a3) [vtx] at (-0.5,0) {a_3};
  \draw[e] (a1) -- (a2);
  \draw[e] (a2) -- (a3);
  \draw[e] (a3) -- (a1);
  \node[font=\small] at (0,-0.5) {(1) Q1.};
\end{scope}

\begin{scope}[shift={(1.6,0)}]
  \node (a1) [vtx] at (-0.5,0.5)   {a_1};
  \node (a2) [vtx] at (0, 1)   {a_2};
  \node (a3) [vtx] at (0, 0)  {a_3};
  \node (a4) [vtx] at (0.5, 0.5)   {a_4};
  \draw[e] (a1) -- (a2);
  \draw[e] (a1) -- (a3);
  \draw[e] (a3) -- (a4);
  \draw[e] (a2) -- (a4);
  \node[font=\small] at (0,-0.5) {(2) Q2.};
\end{scope}

\begin{scope}[shift={(3.2,0)}]
  \node (a1) [vtx] at (-0.5,0.5)   {a_1};
  \node (a2) [vtx] at (0, 1)   {a_2};
  \node (a3) [vtx] at (0, 0)  {a_3};
  \node (a4) [vtx] at (0.5, 0.5)   {a_4};
  \draw[e] (a1) -- (a2);
  \draw[e] (a1) -- (a3);
  \draw[e] (a4) -- (a3);
  \draw[e] (a2) -- (a4);
  \node[font=\small] at (0,-0.5) {(3) Q3.};
\end{scope}

\begin{scope}[shift={(4.8,0)}]
  \node (a1) [vtx] at (-0.5,0.5)   {a_1};
  \node (a2) [vtx] at (0, 1)   {a_2};
  \node (a3) [vtx] at (0, 0)  {a_3};
  \node (a4) [vtx] at (0.5, 0.5)   {a_4};
  \draw[e] (a1) -- (a2);
  \draw[e] (a1) -- (a3);
  \draw[e] (a3) -- (a4);
  \draw[e] (a2) -- (a4);
  \draw[e] (a2) -- (a3);
  \node[font=\small] at (0,-0.5) {(4) Q4.};
\end{scope}

\begin{scope}[shift={(0,-2)}]
  \node (a1) [vtx] at (-0.5,0.5)   {a_1};
  \node (a2) [vtx] at (0, 1)   {a_2};
  \node (a3) [vtx] at (0, 0)  {a_3};
  \node (a4) [vtx] at (0.5, 0.5)   {a_4};
  \draw[e] (a1) -- (a2);
  \draw[e] (a1) -- (a3);
  \draw[e] (a4) -- (a3);
  \draw[e] (a2) -- (a4);
  \draw[e] (a2) -- (a3);
  \node[font=\small] at (0,-0.5) {(5) Q5.};
\end{scope}

\begin{scope}[shift={(1.6,-2)}]
  \node (a1) [vtx] at (-0.5,0.5)   {a_1};
  \node (a2) [vtx] at (0, 1)   {a_2};
  \node (a3) [vtx] at (0, 0)  {a_3};
  \node (a4) [vtx] at (0.5, 0.5)   {a_4};
  \draw[e] (a1) -- (a2);
  \draw[e] (a1) -- (a3);
  \draw[e] (a3) -- (a4);
  \draw[e] (a2) -- (a4);
  \draw[e] (a2) -- (a3);
  \draw[e] (a1) -- (a4);
  \node[font=\small] at (0,-0.5) {(6) Q6.};
\end{scope}

\begin{scope}[shift={(3.2,-2)}]
  \node (a1) [vtx] at (-0.5,1)   {a_1};
  \node (a2) [vtx] at (-0.5, 0)   {a_2};
  \node (a3) [vtx] at (0, 0.5)  {a_3};
  \node (a4) [vtx] at (0.5, 1)   {a_4};
  \node (a5) [vtx] at (0.5, 0)   {a_5};
  \draw[e] (a1) -- (a2);
  \draw[e] (a1) -- (a3);
  \draw[e] (a3) -- (a4);
  \draw[e] (a4) -- (a5);
  \draw[e] (a2) -- (a3);
  \draw[e] (a3) -- (a5);
  \node[font=\small] at (0,-0.5) {(7) Q7.};
\end{scope}

\begin{scope}[shift={(4.8,-2)}]
  \node (a1) [vtx] at (-0.5,1)   {a_1};
  \node (a4) [vtx] at (-0.5, 0)   {a_4};
  \node (a3) [vtx] at (0, 0.5)  {a_3};
  \node (a2) [vtx] at (0.5, 1)   {a_2};
  \node (a5) [vtx] at (0.5, 0)   {a_5};
  \node (a6) [vtx] at (1, 0.5)   {a_6};
  \draw[e] (a1) -- (a2);
  \draw[e] (a3) -- (a1);
  \draw[e] (a2) -- (a3);
  \draw[e] (a3) -- (a4);
  \draw[e] (a5) -- (a3);
  \draw[e] (a4) -- (a5);
  \draw[e] (a6) -- (a2);
  \draw[e] (a6) -- (a5);
  \node[font=\small] at (0,-0.5) {(8) Q8.};
\end{scope}

\begin{scope}[shift={(0,-4)}]
  \node (a1) [vtx] at (-0.5,0.5)   {a_1};
  \node (a2) [vtx] at (0,1)   {a_2};
  \node (a3) [vtx] at (0,0)   {a_3};
  \node (a4) [vtx] at (0.5,0.5)   {a_4};
  \node (a5) [vtx] at (1,1)   {a_5};
  \node (a6) [vtx] at (1,0)   {a_6};
  \draw[e] (a1) -- (a2);
  \draw[e] (a2) -- (a4);
  \draw[e] (a1) -- (a3);
  \draw[e] (a3) -- (a4);
  \draw[e] (a4) -- (a5);
  \draw[e] (a5) -- (a6);
  \draw[e] (a4) -- (a6);
  \node[font=\small] at (0.4,-0.5) {(9) Q9.};
\end{scope}

\begin{scope}[shift={(2,-4)}]
  \node (a1) [vtx] at (-0.4,0.8)   {a_1};
  \node (a2) [vtx] at (0.4,0.8)   {a_2};
  \node (a3) [vtx] at (1.2,0.8)   {a_3};
  \node (a4) [vtx] at (1.2,0)   {a_4};
  \node (a5) [vtx] at (0.4,0)   {a_5};
  \node (a6) [vtx] at (-0.4,0)   {a_6};
  \draw[e] (a1) -- (a2);
  \draw[e] (a2) -- (a3);
  \draw[e] (a3) -- (a4);
  \draw[e] (a4) -- (a5);
  \draw[e] (a5) -- (a6);
  \draw[e] (a6) -- (a1);
  \node[font=\small] at (0.4,-0.5) {(10) Q10.};
\end{scope}

\begin{scope}[shift={(4,-4)}]
  \node (a1) [vtx] at (-0.15,0.4)   {a_1};
  \node (a2) [vtx] at (0.4,0.8)   {a_2};
  \node (a3) [vtx] at (1.1,0.8)   {a_3};
  \node (a4) [vtx] at (1.1,0)   {a_4};
  \node (a5) [vtx] at (0.4,0)   {a_5};
  \draw[e] (a1) -- (a2);
  \draw[e] (a2) -- (a3);
  \draw[e] (a3) -- (a4);
  \draw[e] (a4) -- (a5);
  \draw[e] (a5) -- (a1);
  \node[font=\small] at (0.4,-0.5) {(11) Q11.};
\end{scope}

\end{tikzpicture}
\caption{Tested Queries}
\label{fig:queries}
\end{figure}
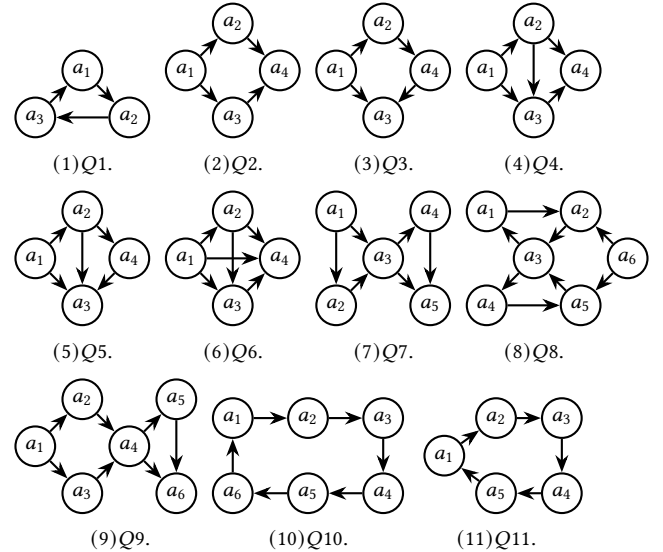
\sstitle{Tested Systems} 
We test our method in two representative database systems: DuckDB~\cite{DBLP:conf/cidr/RaasveldtM20} and Umbra~\cite{Freitag2020WCOJ}. 

DuckDB (v1.3.0) is an open-source in-process columnar DBMS optimized for OLAP workloads, providing easy integration into analytical applications within the  same process. DuckDB primarily relies on optimized binary joins and has no built-in worse-case optimal joins.
Umbra (25.07.1) is a high-performance DBMS designed for modern hardware. It uses a compact columnar format and morsel-driven execution for parallelism. Umbra natively supports worst-case optimal joins, enabling efficient multiway join execution for dense graph queries.

\sstitle{Metrics}
We measure the overall query execution time and intermediate table sizes for all engines.
For \sysname{}, we report end-to-end execution time that includes the time for obtaining thresholds, performing table partitioning, and executing the query. For DuckDB, we use its memory mode (all tables stored in memory, no disk I/O). For Umbra, we use the default configuration, except for adjustments to query memory consumption, timeout, and the WCOJ-related options described below. 

For both \sysname{} and the baselines, each query is executed four times, and we report the minimum execution time to avoid cold start effects. We set a memory limit of 220 GB and a timeout of 15 minutes for each run, and use 32 threads for query execution.

\subsection{Overall Performance Evaluation}\label{sec:overall}To assess the overall performance benefits introduced by the strategies proposed in the previous sections, we conduct a comprehensive comparison between \sysname{} and the baseline join algorithms within each database system.

\subsubsection{DuckDB Comparison}  
Table~\ref{tab:duckdb_cmp} reports the runtime comparison between \sysname{} over DuckDB and the original DuckDB binary join, and Table~\ref{tab:duckdb_max_intermediate_mt} shows the maximum intermediate result sizes. Overall, all queries that can be executed by the original DuckDB (29 in total) can also be executed with \sysname{}. Moreover, with \sysname{}, DuckDB is able to complete 14 additional queries that the original binary join fails to finish due to timeouts or out-of-memory (43 vs. 29). For the queries that can finish in both cases, \sysname{} reduces runtime by 2.1$\times$ on average (up to 13.6$\times$) and cuts the maximum intermediate results by 7.9$\times$ on average (up to 74$\times$).

For datasets with high skew (i.e., WGPB, GPlus, and Topcats), \sysname{} achieves substantial performance gains. On WGPB and Topcats, DuckDB frequently fails due to timeouts or out-of-memory, while \sysname{} completes most queries successfully with significantly lower runtime. For GPlus, both approaches fail on the majority of queries, but for the completed query (Q1), \sysname{} can achieve a speedup of 1.38$\times$  over the original DuckDB. These results highlight the effectiveness of our strategy in coping with highly skewed data distributions, which are prevalent in real-world social network datasets.

For the partially skewed dataset (i.e., Skitter), \sysname{} still demonstrates clear benefits. Because Skitter is relatively sparse and many joins have small thresholds, a large portion of splits can be skipped. Consequently, most of the queries degenerate to DuckDB’s original binary join plan. For the remaining finished queries (Q3, Q4), \sysname{} achieves improvements over the baseline in Q4; however, for Q3, the performance is slightly worse than DuckDB’s original plan. The slowdown primarily stems from the extra cost introduced by the split operator, which lengthens the execution pipeline and causes one of the subqueries to run slower than the baseline query, even though the total size of intermediate results is smaller than in the baseline.

For uniform datasets (i.e., Orkut and USPatent), the performance benefit largely depends on graph density. On USPatent, the gap between \sysname{} and the DuckDB baseline is relatively small, since all joins in the query graph are lightweight, leading \sysname{} to consistently adopt the original binary join plans. In contrast, on Orkut, \sysname{} outperforms the baseline in most queries except Q7. The exception arises because Q7 involves two triangles, $(R_1\bowtie R_2\bowtie R_3)\bowtie(R_4\bowtie R_5\bowtie R_6)$, and the optimal strategy would be to compute each triangle independently and then join the intermediate results. In practice, however, our execution generates four subqueries that cannot fully reuse overlapping computations, thereby causing performance degradation on this query.

Taken together, these results confirm that \sysname{} provides consistent and often substantial runtime improvements, especially under skewed data distributions where traditional binary joins struggle the most.

\begin{table*}[t]
\centering
\small
\begin{tabular}{l l c c c c c c c c c c c}
\toprule
 &  & Q1 & Q2 & Q3 & Q4 & Q5 & Q6 & Q7 & Q8 & Q9 & Q10 & Q11 \\
\midrule
\multirow{2}{*}{WGPB} 
 & \sysname{} & \higher{3.824} & \higher{29.879} & \higher{25.095} & \higher{16.074} & \higher{59.876} & \higher{19.824} & \higher{19.824} & \unfinished{\texttt{OOM}} & \unfinished{\texttt{TLE}} & \unfinished{\texttt{OOM}} & \higher{48.775} \\
 & Default & \unfinished{\texttt{TLE}} & 59.230 & \unfinished{\texttt{OOM}} & \unfinished{\texttt{TLE}} & \unfinished{\texttt{TLE}} & \unfinished{\texttt{OOM}} & \unfinished{\texttt{TLE}} & \unfinished{\texttt{OOM}} & \unfinished{\texttt{TLE}} & \unfinished{\texttt{OOM}} & 666.440 \\
\midrule
\multirow{2}{*}{Orkut} 
 & \sysname{} & \higher{15.076} & \unfinished{\texttt{OOM}} & \unfinished{\texttt{OOM}} & \higher{57.86} & \higher{56.548} & \higher{101.028} & 567.008 & \higher{128.968} & \unfinished{\texttt{TLE}} & \unfinished{\texttt{OOM}} & \unfinished{\texttt{OOM}} \\
 & Default & 19.200 & \unfinished{\texttt{OOM}} & \unfinished{\texttt{OOM}} & \unfinished{\texttt{OOM}} & \unfinished{\texttt{OOM}} & \unfinished{\texttt{TLE}} & \higher{515.240} & \unfinished{\texttt{OOM}} & \unfinished{\texttt{OOM}} & \unfinished{\texttt{OOM}} & \unfinished{\texttt{OOM}} \\
\midrule
\multirow{2}{*}{GPlus} 
 & \sysname{} & \higher{27.156} & \unfinished{\texttt{OOM}} & \unfinished{\texttt{OOM}} & \unfinished{\texttt{TLE}} & \unfinished{\texttt{TLE}} & \unfinished{\texttt{OOM}} & \unfinished{\texttt{OOM}} & \unfinished{\texttt{OOM}} & \unfinished{\texttt{OOM}} & \unfinished{\texttt{OOM}} & \unfinished{\texttt{TLE}} \\
 & Default & 34.770 & \unfinished{\texttt{OOM}} & \unfinished{\texttt{OOM}} & \unfinished{\texttt{OOM}} & \unfinished{\texttt{OOM}} & \unfinished{\texttt{OOM}} & \unfinished{\texttt{OOM}} & \unfinished{\texttt{OOM}} & \unfinished{\texttt{OOM}} & \unfinished{\texttt{OOM}} & \unfinished{\texttt{OOM}} \\
\midrule
\multirow{2}{*}{USPatent} 
 & \sysname{} & \high{0.365} & \high{1.036} & \high{1.167} & \high{1.906} & \high{1.676} & \high{13.907} & \high{1.437} & \high{1.336} & \high{2.536} & \high{1.877} & \high{0.886} \\
 & Default & \high{0.355} & \high{1.010} & \high{1.070} & \high{1.840} & \high{1.850} & \high{14.280} & \high{1.370} & \high{1.200} & \high{2.690} & \high{1.840} & \high{0.910} \\
\midrule
\multirow{2}{*}{Skitter} 
 & \sysname{} & \high{0.741} & \high{7.748} & 4.732 & \higher{2.753} & \high{4.632} & \high{\unfinished{\texttt{TLE}}} & \high{72.324} & \high{7.094} & \high{\unfinished{\texttt{TLE}}} & \high{\unfinished{\texttt{OOM}}} & \high{29.716} \\
 & Default & \high{0.706} & \high{7.750} & \higher{4.220} & 4.820 & \high{4.360} & \high{\unfinished{\texttt{TLE}}} & \high{72.300} & \high{6.950} & \high{\unfinished{\texttt{TLE}}} & \high{\unfinished{\texttt{OOM}}} & \high{29.770} \\
\midrule
\multirow{2}{*}{Topcats} 
 & \sysname{} & \higher{4.525} & \higher{24.056} & \higher{22.917} & \higher{7.976} & \higher{9.207} & \higher{11.836} & \higher{47.456} & \higher{671.069} & \unfinished{\texttt{TLE}} & \unfinished{\texttt{OOM}} & \higher{433.366} \\
 & Default & 7.800 & 30.660 & 248.680 & \unfinished{\texttt{OOM}} & \unfinished{\texttt{OOM}} & \unfinished{\texttt{TLE}} & 384.250 & \unfinished{\texttt{OOM}} & \unfinished{\texttt{OOM}} & \unfinished{\texttt{OOM}} & 803.250 \\
\bottomrule
\end{tabular}
\caption{Runtime (s) in DuckDB. \unfinished{\texttt{TLE}} indicates 900s time limit exceeded. \unfinished{\texttt{OOM}} indicates out-of-memory.  For comparison, the best performance is highlighted with a \colorbox{gray!25}{gray background}, and \high{underline} indicates both systems use the same query plan.}
\label{tab:duckdb_cmp}
\vspace{-8pt}
\end{table*}

\begin{table*}[t]
\centering
\small
\begin{tabular}{l l c c c c c c c c c c c}
\toprule
 & & Q1 & Q2 & Q3 & Q4 & Q5 & Q6 & Q7 & Q8 & Q9 & Q10 & Q11 \\
\midrule
\multirow{2}{*}{WGPB} 
 & \sysname{} & \higher{559} & \higher{559} & \higher{1026} & \higher{1026} & \higher{2982} & \higher{559} & \higher{559} & \unfinished{\texttt{OOM}} & \unfinished{\texttt{TLE}} & \unfinished{\texttt{OOM}} & \higher{7345} \\
 & Default & \unfinished{\texttt{TLE}} & 936 & \unfinished{\texttt{OOM}} & \unfinished{\texttt{TLE}} & \unfinished{\texttt{TLE}} & \unfinished{\texttt{OOM}} & \unfinished{\texttt{TLE}} & \unfinished{\texttt{OOM}} & \unfinished{\texttt{TLE}} & \unfinished{\texttt{OOM}} & 26706 \\
\midrule
\multirow{2}{*}{Orkut} 
 & \sysname{} & \higher{9450} & \unfinished{\texttt{OOM}} & \unfinished{\texttt{OOM}} & \higher{11803} & \higher{9450} & \higher{16452} & \higher{9450} & \higher{9450} & \unfinished{\texttt{TLE}} & \unfinished{\texttt{OOM}} & \unfinished{\texttt{OOM}} \\
 & Default & 10913 & \unfinished{\texttt{OOM}} & \unfinished{\texttt{OOM}} & \unfinished{\texttt{OOM}} & \unfinished{\texttt{OOM}} & \unfinished{\texttt{TLE}} & 57171 & \unfinished{\texttt{OOM}} & \unfinished{\texttt{OOM}} & \unfinished{\texttt{OOM}} & \unfinished{\texttt{OOM}} \\
\midrule
\multirow{2}{*}{GPlus} 
 & \sysname{} & \higher{3286} & \unfinished{\texttt{OOM}} & \unfinished{\texttt{OOM}} & \unfinished{\texttt{TLE}} & \unfinished{\texttt{TLE}} & \unfinished{\texttt{OOM}} & \unfinished{\texttt{OOM}} & \unfinished{\texttt{OOM}} & \unfinished{\texttt{OOM}} & \unfinished{\texttt{OOM}} & \unfinished{\texttt{TLE}} \\
 & Default & 6949 & \unfinished{\texttt{OOM}} & \unfinished{\texttt{OOM}} & \unfinished{\texttt{OOM}} & \unfinished{\texttt{OOM}} & \unfinished{\texttt{OOM}} & \unfinished{\texttt{OOM}} & \unfinished{\texttt{OOM}} & \unfinished{\texttt{OOM}} & \unfinished{\texttt{OOM}} & \unfinished{\texttt{OOM}} \\
\midrule
\multirow{2}{*}{USPatent} 
 & \sysname{} & \high{82} & \high{82} & \high{237} & \high{234} & \high{237} & \high{11528} & \high{179} & \high{80} & \high{234} & \high{194} & \high{194} \\
 & Default & \high{82} & \high{82} & \high{237} & \high{234} & \high{237} & \high{11528} & \high{179} & \high{80} & \high{234} & \high{194} & \high{194} \\
\midrule
\multirow{2}{*}{Skitter} 
 & \sysname{} & \high{454} & \high{454} & 454 & 496 & \high{454} & \high{\unfinished{\texttt{TLE}}} & \high{30693} & \high{454} & \high{\unfinished{\texttt{TLE}}} & \high{\unfinished{\texttt{OOM}}} & \high{18774} \\
 & Default & \high{454} & \high{454} & 454 & \higher{454} & \high{454} & \high{\unfinished{\texttt{TLE}}} & \high{30693} & \high{454} & \high{\unfinished{\texttt{TLE}}} & \high{\unfinished{\texttt{OOM}}} & \high{18774} \\
\midrule
\multirow{2}{*}{Topcats} 
 & \sysname{}   & \higher{1944} & \higher{1944} & \higher{1944} & \higher{1944} & \higher{1944} & \higher{1994} & \higher{1944} & \higher{2371} & \unfinished{\texttt{TLE}} & \unfinished{\texttt{OOM}} & \higher{199772} \\
 & Default & 2663 & 2663 & 144725 & \unfinished{\texttt{OOM}} & \unfinished{\texttt{OOM}} & \unfinished{\texttt{TLE}} & 144725 & \unfinished{\texttt{OOM}} & \unfinished{\texttt{OOM}} & \unfinished{\texttt{OOM}} & 238984\\ 
\bottomrule
\end{tabular}
\caption{Max intermediate result size (in million tuples) in DuckDB.
\unfinished{\texttt{TLE}} indicates 900s time limit exceeded. \unfinished{\texttt{OOM}} indicates out-of-memory. 
The best (smallest) value per dataset-query pair is highlighted with a \colorbox{gray!25}{gray background}, and \high{underline} indicates both systems use the same query plan.}
\label{tab:duckdb_max_intermediate_mt}
\vspace{-10pt}
\end{table*}

\subsubsection{Umbra Comparison}
Table~\ref{tab:umbra_cmp} presents the runtime comparison between \sysname{} and three join settings in Umbra. Table~\ref{tab:umbra_max_intermediate_mt} reports the maximum intermediate result size when executing queries in Umbra. \emph{Binary} indicates that Umbra is restricted to binary joins only, while \emph{WCOJ} enforces the use of its built-in worst-case optimal join. \emph{Default} refers to Umbra’s optimizer-chosen plan, which may combine binary joins, WCOJ, or a hybrid of the two. Overall, all queries that can be executed under Umbra’s binary join (35 in total) can also be executed with \sysname{}. Moreover, \sysname{} enables Umbra to finish 10 additional queries that the default setting cannot complete (45 vs. 35). For the queries that can run in both cases, \sysname{} achieves 1.3$\times$ speedups on average (up to 6.1$\times$) and reduces maximum intermediate results by 1.2$\times$ on average (up to 2.1$\times$).

In our analysis, we first focus on comparing \sysname{} with Umbra’s binary join (denoted as UmbraBJ below), as this aligns directly with the primary objective of our work.

Overall, the results are consistent with our observations on DuckDB: \sysname{} outperforms UmbraBJ in similar query cases. When \sysname{} underperforms the default binary join, the performance gap remains marginal and is significantly smaller than that of Umbra’s WCOJ (i.e., Umbra’s WCOJ often times out in cases where the default binary join successfully finishes), underscoring \sysname{}’s robustness and balanced efficiency across diverse workloads. For queries where \sysname{} takes longer time, Orkut Q7 shares the same reason discussed in the DuckDB experiments. The most interesting case is Skitter Q4—Umbra’s query plan, $(R_1\bowtie R_2)\bowtie((R_4\bowtie R_5)\bowtie R_3)$, enables the reuse of intermediate results from $(R_1\bowtie R_2)$ and $(R_4\bowtie R_5)$, whereas \sysname{} currently cannot exploit such reuse. These two queries highlights a potential optimization direction for handling redundant computations during splitting. For the remaining slower queries (Orkut Q4 and Topcats Q11), the cause is similar to DuckDB’s Skitter Q3—the additional split operator increases pipeline length and introduces extra overhead, making one of the subqueries slower than the baseline plan. This suggests room for improvement in the scheduling mechanism when applying \sysname{}.

In summary, similar to the DuckDB results, the UmbraBJ experiments confirm that \sysname{} consistently delivers substantial runtime improvements. We now extend our analysis to examine Umbra’s built-in worst-case optimal join (WCOJ) and its default setting for a more comprehensive comparison.

In general, Umbra’s built-in WCOJ occasionally outperforms \sysname{}, but is often slower than UmbraBJ. This suggests that while WCOJ can exploit specific join patterns effectively (e.g., Q1 triangle and Q6 4-clique), its benefits are not consistent across datasets and queries. For Umbra’s default setting, we observe that the optimizer does not always choose the optimal execution plan—for example, in Skitter Q6 and Topcats Q6, the selected plans result in suboptimal performance. Moreover, in Q7, which joins two triangles $(R_1\bowtie R_2\bowtie R_3)\bowtie(R_4\bowtie R_5\bowtie R_6)$, Umbra’s optimizer applies WCOJ within each triangle and then performs a binary join to combine the intermediate results. This design leverages the strengths of both strategies but also illustrates the inherent complexity and difficulty of hybrid plan selection.

Overall, while Umbra’s built-in WCOJ and its default optimizer can provide advantages in specific scenarios,  \sysname{} can offer consistent and stable improvements on a wide range of queries and datasets. This highlights its effectiveness as a general-purpose strategy for accelerating join workloads.
\begin{table*}[t]
\centering
\small
\begin{tabular}{l l c c c c c c c c c c c}
\toprule
 &  & Q1 & Q2 & Q3 & Q4 & Q5 & Q6 & Q7 & Q8 & Q9 & Q10 & Q11 \\
\midrule
\multirow{4}{*}{WGPB} & \sysname{} & \higher{3.013} & \higher{25.354} & \higher{12.665} & \higher{10.861} & \higher{255} & \higher{10.544} & \higher{24.506} & \unfinished{\texttt{OOM}} & \unfinished{\texttt{TLE}} & \unfinished{\texttt{OOM}} & \higher{39.586} \\
                      & Binary  & 5.43 & 156.777 & \unfinished{\texttt{TLE}} & \unfinished{\texttt{OOM}} & \unfinished{\texttt{OOM}} & \unfinished{\texttt{OOM}} & \unfinished{\texttt{OOM}} & \unfinished{\texttt{OOM}} & \unfinished{\texttt{OOM}} & \unfinished{\texttt{TLE}} & \unfinished{\texttt{TLE}} \\
                      & WCOJ    & 3.774 & \unfinished{\texttt{TLE}} & 17.108 & 12.726 & \wcoj{4.404} & \wcoj{4.426} & 37.652 & \unfinished{\texttt{TLE}} & \unfinished{\texttt{TLE}} & \unfinished{\texttt{TLE}} & 39.716 \\
                      & Default & 5.929 & 158.576 & \unfinished{\texttt{TLE}} & \unfinished{\texttt{OOM}} & \unfinished{\texttt{OOM}} & \unfinished{\texttt{OOM}} & \unfinished{\texttt{OOM}} & \unfinished{\texttt{OOM}} & \unfinished{\texttt{OOM}} & \unfinished{\texttt{TLE}} & \unfinished{\texttt{TLE}} \\
\midrule
\multirow{4}{*}{Orkut} & \sysname{} & \higher{11.767} & \unfinished{\texttt{OOM}} & \unfinished{\texttt{OOM}} & 36.268 & \higher{32.068} & \higher{39.277} & 89.032 & \higher{20.327} & \unfinished{\texttt{OOM}} & \unfinished{\texttt{OOM}} & \unfinished{\texttt{OOM}} \\
                      & Binary  & 13.182 & \unfinished{\texttt{OOM}} & \unfinished{\texttt{OOM}} & \higher{30.719} & 36.294 & \unfinished{\texttt{TLE}} & \higher{58.998} & \unfinished{\texttt{OOM}} & \unfinished{\texttt{OOM}} & \unfinished{\texttt{OOM}} & \unfinished{\texttt{OOM}} \\
                      & WCOJ    & \wcoj{7.833} & \wcoj{722.336} & \wcoj{677.691} & 724.578 & \unfinished{\texttt{TLE}} & 53.473 & \unfinished{\texttt{TLE}} & \unfinished{\texttt{TLE}} & \unfinished{\texttt{TLE}} & \unfinished{\texttt{TLE}} & \unfinished{\texttt{TLE}} \\
                      & Default & 7.824 & \unfinished{\texttt{OOM}} & \unfinished{\texttt{OOM}} & 30.679 & 36.156 & \unfinished{\texttt{TLE}} & 52.707 & \unfinished{\texttt{OOM}} & \unfinished{\texttt{OOM}} & \unfinished{\texttt{OOM}} & \unfinished{\texttt{OOM}} \\
\midrule
\multirow{4}{*}{GPlus} & \sysname{} & \higher{33.458} & \unfinished{\texttt{OOM}} & \unfinished{\texttt{OOM}} & \unfinished{\texttt{TLE}} & \unfinished{\texttt{OOM}} & \unfinished{\texttt{OOM}} & \unfinished{\texttt{TLE}} & \unfinished{\texttt{OOM}} & \unfinished{\texttt{OOM}} & \unfinished{\texttt{OOM}} & \unfinished{\texttt{OOM}} \\
                      & Binary  & 41.520 & \unfinished{\texttt{OOM}} & \unfinished{\texttt{OOM}} & \unfinished{\texttt{OOM}} & \unfinished{\texttt{OOM}} & \unfinished{\texttt{OOM}} & \unfinished{\texttt{TLE}} & \unfinished{\texttt{OOM}} & \unfinished{\texttt{OOM}} & \unfinished{\texttt{OOM}} & \unfinished{\texttt{OOM}} \\
                      & WCOJ    & \wcoj{13.397} & \unfinished{\texttt{TLE}} & \unfinished{\texttt{TLE}} & \unfinished{\texttt{TLE}} & \unfinished{\texttt{TLE}} & \unfinished{\texttt{TLE}} & \unfinished{\texttt{TLE}} & \unfinished{\texttt{TLE}} & \unfinished{\texttt{TLE}} & \unfinished{\texttt{TLE}} & \unfinished{\texttt{TLE}} \\
                      & Default & 13.386 & \unfinished{\texttt{OOM}} & \unfinished{\texttt{OOM}} & \unfinished{\texttt{OOM}} & \unfinished{\texttt{OOM}} & \unfinished{\texttt{OOM}} & \unfinished{\texttt{TLE}} & \unfinished{\texttt{OOM}} & \unfinished{\texttt{OOM}} & \unfinished{\texttt{OOM}} & \unfinished{\texttt{OOM}} \\
\midrule
\multirow{4}{*}{USPatent} & \sysname{} & \high{0.170} & \high{0.472} & \high{0.486} & \high{0.553} & \high{0.339} & \high{1.936} & \high{0.389} & \high{8.097} & \high{3.487} & \high{2.854} & \high{0.764} \\
                      & Binary  & \high{0.172} & \high{0.503} & \high{0.477} & \high{0.489} & \high{0.346} & \high{1.941} & \high{0.378} & \high{7.947} & \high{3.591} & \high{2.818} & \high{0.721} \\
                      & WCOJ    & 0.400 & 1.125 & 2.139 & 1.460 & 2.246 & \wcoj{0.626} & 0.984 & 126.174 & 6.606 & \wcoj{2.423} & 1.351 \\
                      & Default & 0.390 & 0.433 & 0.467 & 0.483 & 0.305 & 1.947 & 0.956 & 8.004 & 3.573 & 2.552 & 0.725 \\
\midrule
\multirow{4}{*}{Skitter} & \sysname{} & \high{0.601} & \high{2.900} & \higher{1.509} & 1.845 & \high{0.671} & \high{310} & \high{2.360} & \high{\unfinished{\texttt{OOM}}} & \high{66.540} & \high{\unfinished{\texttt{OOM}}} & \high{14.068} \\
                      & Binary  & \high{0.425} & \high{2.865} & 1.992 & \higher{0.664} & \high{0.676} & \high{305} & \high{2.226} & \high{\unfinished{\texttt{OOM}}} & \high{70.859} & \high{\unfinished{\texttt{OOM}}} & \high{14.087} \\
                      & WCOJ    & 0.511 & 9.611 & 6.379 & 9.221 & 186.201 & \wcoj{1.365} & 57.504 & \unfinished{\texttt{TLE}} & \unfinished{\texttt{TLE}} & \unfinished{\texttt{TLE}} & 240.686 \\
                      & Default & 0.531 & 2.969 & 1.864 & 0.698 & 0.716 & 306 & 1.358 & \unfinished{\texttt{OOM}} & 68.826 & \unfinished{\texttt{OOM}} & 13.770 \\
\midrule
\multirow{4}{*}{Topcats} & \sysname{} & \higher{2.269} & \higher{16.398} & \higher{12.642} & \higher{5.475} & \higher{5.820} & \higher{8.935} & \higher{10.453} & \higher{62.254} & \higher{311} & \unfinished{\texttt{OOM}} & 366 \\
                      & Binary  & 7.499 & 16.886 & 15.466 & 10.782 & 6.650 & 10.592 & 12.431 & \unfinished{\texttt{OOM}} & \unfinished{\texttt{OOM}} & \unfinished{\texttt{OOM}} & \higher{250} \\
                      & WCOJ    & \wcoj{1.428} & \unfinished{\texttt{TLE}} & 140.463 & 64.064 & 54.212 & \wcoj{4.537} & \unfinished{\texttt{TLE}} & \unfinished{\texttt{TLE}} & \unfinished{\texttt{TLE}} & \unfinished{\texttt{TLE}} & \unfinished{\texttt{TLE}} \\
                      & Default & 1.387 & 17.020 & 15.206 & 10.375 & 6.781 & 10.592 & 5.475 & \unfinished{\texttt{OOM}} & \unfinished{\texttt{OOM}} & \unfinished{\texttt{OOM}} & 249.193 \\
\bottomrule
\end{tabular}
\caption{Runtime (s) in Umbra. \unfinished{\texttt{TLE}} indicates 900s time limit exceeded. \unfinished{\texttt{OOM}} indicates out-of-memory. For comparison, the smaller (better) result between \sysname{} and Binary is highlighted with a \colorbox{gray!25}{gray background}, and \high{underline} indicates \sysname{} and Binary are using the same plan. \wcoj{Bold} indicates that Umbra's built-in WCOJ used less time comparing to \sysname{} and Binary.}
\label{tab:umbra_cmp}
\vspace{-8pt}
\end{table*}

\begin{table*}[t]
\centering
\small
\begin{tabular}{l l c c c c c c c c c c c}
\toprule
 & & Q1 & Q2 & Q3 & Q4 & Q5 & Q6 & Q7 & Q8 & Q9 & Q10 & Q11 \\
\midrule
\multirow{2}{*}{WGPB} 
 & \sysname{} & \higher{559} & \higher{559} & \higher{1026} & \higher{1026} & \higher{2982} & \higher{559} & \higher{559} & \unfinished{\texttt{OOM}} & \unfinished{\texttt{TLE}} & \unfinished{\texttt{OOM}} & \higher{7345} \\
 & Binary & 936 & 936 & \unfinished{\texttt{TLE}} & \unfinished{\texttt{OOM}} & \unfinished{\texttt{OOM}} & \unfinished{\texttt{OOM}} & \unfinished{\texttt{OOM}} & \unfinished{\texttt{OOM}} & \unfinished{\texttt{OOM}} & \unfinished{\texttt{TLE}} & \unfinished{\texttt{TLE}} \\
\midrule
\multirow{2}{*}{Orkut} 
 & \sysname{} & \higher{9450} & \unfinished{\texttt{OOM}} & \unfinished{\texttt{OOM}} & \higher{11803} & \higher{9540} & \higher{16452} & \higher{9450} & \higher{9450} & \unfinished{\texttt{OOM}} & \unfinished{\texttt{OOM}} & \unfinished{\texttt{OOM}} \\
 & Binary & 10913 & \unfinished{\texttt{OOM}} & \unfinished{\texttt{OOM}} & 12489 & 10913 & \unfinished{\texttt{TLE}} & 10913 & \unfinished{\texttt{OOM}} & \unfinished{\texttt{OOM}} & \unfinished{\texttt{OOM}} & \unfinished{\texttt{OOM}} \\
\midrule
\multirow{2}{*}{GPlus} 
 & \sysname{} & \higher{3286} & \unfinished{\texttt{OOM}} & \unfinished{\texttt{OOM}} & \unfinished{\texttt{TLE}} & \unfinished{\texttt{OOM}} & \unfinished{\texttt{OOM}} & \unfinished{\texttt{TLE}} & \unfinished{\texttt{OOM}} & \unfinished{\texttt{OOM}} & \unfinished{\texttt{OOM}} & \unfinished{\texttt{OOM}} \\
 & Binary & 6949 & \unfinished{\texttt{OOM}} & \unfinished{\texttt{OOM}} & \unfinished{\texttt{OOM}} & \unfinished{\texttt{OOM}} & \unfinished{\texttt{OOM}} & \unfinished{\texttt{TLE}} & \unfinished{\texttt{OOM}} & \unfinished{\texttt{OOM}} & \unfinished{\texttt{OOM}} & \unfinished{\texttt{OOM}} \\
\midrule
\multirow{2}{*}{USPatent} 
 & \sysname{} & \high{82} & \high{82} & \high{362} & \high{239} & \high{82} & \high{1539} & \high{82} & \high{1381} & \high{1381} & \high{1381} & \high{362} \\
 & Binary & \high{82} & \high{82} & \high{362} & \high{239} & \high{82} & \high{1539} & \high{82} & \high{1381} & \high{1381} & \high{1381} & \high{362} \\
\midrule
\multirow{2}{*}{Skitter} 
 & \sysname{} & \high{444} & \high{454} & 454 & 496 & \high{444} & \high{237177} & \high{444} & \high{\unfinished{\texttt{OOM}}} & \high{3999} & \high{\unfinished{\texttt{OOM}}} & \high{18786} \\
 & Binary & \high{444} & \high{454} & 454 & \higher{444} & \high{444} & \high{237177} & \high{444} & \high{\unfinished{\texttt{OOM}}} & \high{3999} & \high{\unfinished{\texttt{OOM}}} & \high{18786} \\
\midrule
\multirow{2}{*}{Topcats} 
 & \sysname{} & \higher{1944} & \higher{1944} & \higher{1944} & \higher{1944} & \higher{1944} & \higher{1944} & \higher{1944} & \higher{2371} & \higher{74733} & \unfinished{\texttt{OOM}} & \higher{199772} \\
 & Binary & 2663 & 2663 & 2663 & 2663 & 2663 & 2663 & 2663 & \unfinished{\texttt{OOM}} & \unfinished{\texttt{OOM}} & \unfinished{\texttt{OOM}} & 238985\\
\bottomrule
\end{tabular}
\caption{Max intermediate result size(million tuples) in Umbra.
\unfinished{\texttt{TLE}} indicates 900s time limit exceeded. \unfinished{\texttt{OOM}} indicates out-of-memory. 
The best (smallest) value per dataset-query pair is highlighted with a \colorbox{gray!25}{gray background}, and \high{underline} indicates using the same plan.}
\label{tab:umbra_max_intermediate_mt}
\vspace{-10pt}
\end{table*}

\subsection{Effectiveness Study}\label{sec:effective}

In this section, we use DuckDB to evaluate the effectiveness of each strategy proposed in \sysname{}.

\begin{table*}[t]
\centering
\small
\begin{tabular}{l ccc ccc ccc ccc}
\toprule
\multirow{2}{*}{Method} 
& \multicolumn{3}{c}{WGPB} 
& \multicolumn{3}{c}{Orkut} 
& \multicolumn{3}{c}{GPlus} 
& \multicolumn{3}{c}{Topcats} \\
\cmidrule(lr){2-4} \cmidrule(lr){5-7} \cmidrule(lr){8-10} \cmidrule(lr){11-13}
& Q1 & Q2 & Q5 & Q1 & Q2 & Q5 & Q1 & Q2 & Q5 & Q1 & Q2 & Q5 \\
\midrule
DuckDB Default 
& \unfinished{\texttt{TLE}} & 59.230 & \unfinished{\texttt{TLE}} 
& 19.200 & \unfinished{\texttt{OOM}} & \unfinished{\texttt{OOM}} 
& 34.770 & \unfinished{\texttt{OOM}} & \unfinished{\texttt{OOM}} 
& 7.800 & 30.660 & \unfinished{\texttt{TLE}} \\
\midrule
Single Split
& 8.262 & 74.667 & 107.836 
& 17.756 & \unfinished{\texttt{OOM}} & 181.599 
& 46.750 & \unfinished{\texttt{OOM}} & \unfinished{\texttt{OOM}} 
& 6.795 & 70.766 & 36.136 \\
\midrule
Co-split
& 22.276 & 59.945 & 61.923 
& 34.997 & \unfinished{\texttt{OOM}} & 74.049 
& 34.724 & \unfinished{\texttt{OOM}} & \unfinished{\texttt{OOM}}
& 10.577 & 26.947 & 355 \\
\midrule
Co-split + Split-Set-Selection Strategies
& 3.824 & 29.879 & 59.876
& 15.076 & \unfinished{\texttt{OOM}} & 56.548 
& 27.156 & \unfinished{\texttt{OOM}} & \unfinished{\texttt{TLE}} 
& 4.525 & 24.056 & 9.207 \\
\bottomrule
\end{tabular}
\caption{Runtime(s) effects on Q1, Q2, and Q5 across datasets. 
\unfinished{\texttt{TLE}} indicates 900s time limit exceeded. 
\unfinished{\texttt{OOM}} indicates out-of-memory.}
\label{tab:effectiveness_combined}
\vspace{-8pt}
\end{table*}

\subsubsection{Split Threshold}

To evaluate the effectiveness of the split threshold proposed in Section~\ref{sec:split_threshold}, we conduct experiments on query Q1 (triangle query) over the GPlus and Topcats datasets, for the remaining datasets, the results are similar. Specifically, we fix the splitting attribute on a single table and vary the threshold %(threshold here indicates the number of heavy values) 
to compare their impact on join execution time. Note that threshold=0 corresponds to the binary join plan.

In Figure~\ref{fig:threshold_vs_time_tuple}, as the threshold increases from 0, the execution time and the maximum intermediates  first decrease and then increase again. The selected thresholds in \sysname{} (highlighted with slashes) consistently achieve near-optimal performance across datasets. 
%These results confirm the effectiveness of our threshold selection method across datasets with different scales and degree distributions.
\begin{figure}[t]
  \centering
  \includegraphics[width=\linewidth]{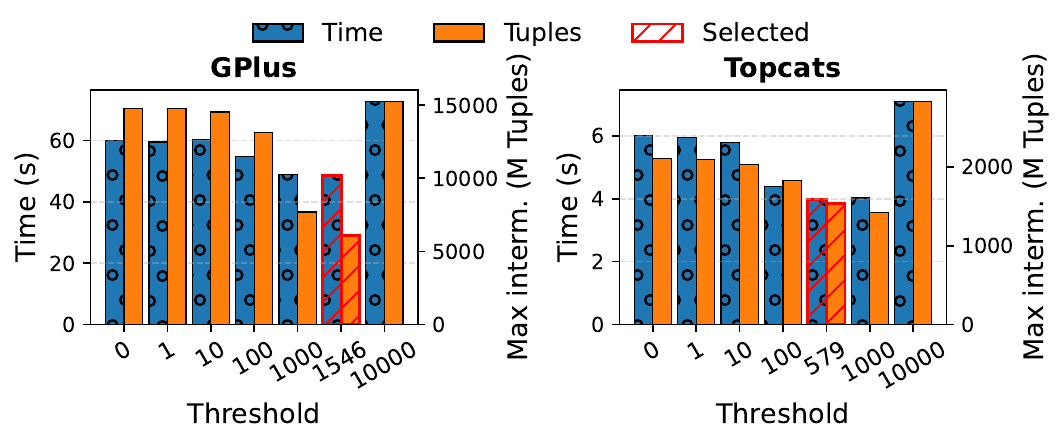}
  \captionsetup{font=small,skip=3pt}
  \caption{Threshold vs Time (s) in Q1}
  \label{fig:threshold_vs_time_tuple}
  
\vspace{-3pt}
\end{figure}

\subsubsection{Split Schedule}

\begin{figure}[t]
  \centering
  \includegraphics[width=\linewidth]{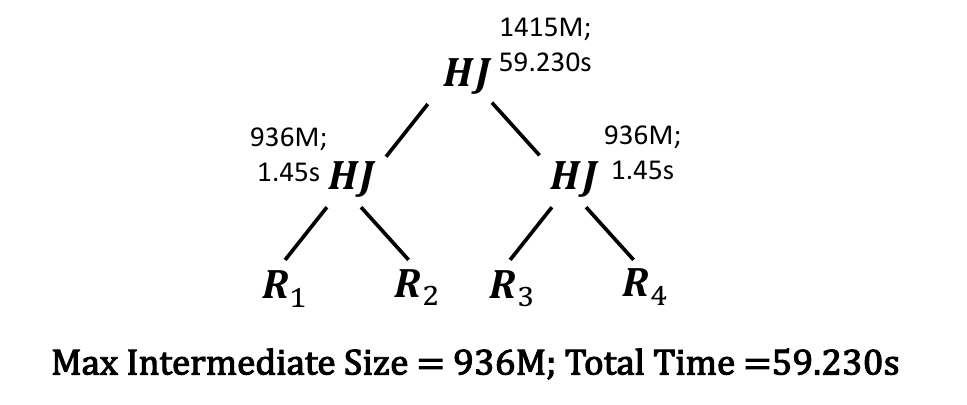}
  \captionsetup{font=small,skip=3pt}
  \caption{Per-join performance breakdown of the most efficient plan on Topcats Q2. Labels in the form “x;y” indicate that a subtree outputs x million tuples and runs for y seconds.}
  \label{fig:case_study_Q2}
\end{figure}
\begin{figure}[t]
  \centering
  \includegraphics[width=\linewidth]{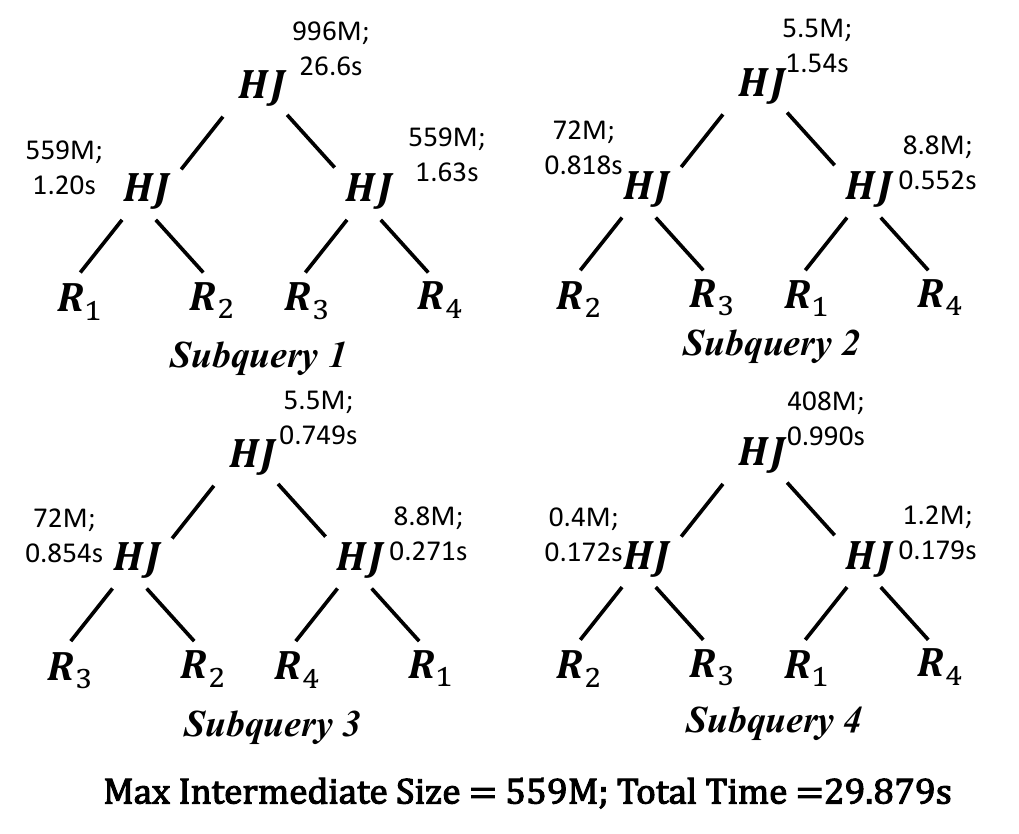}
  \captionsetup{font=small,skip=3pt}
  \caption{Per-join performance breakdown of SplitJoin on Topcats Q2. Labels in the form “x;y” indicate that a subtree outputs x million tuples and runs for y seconds.}
  \label{fig:case_study_Q2_split}
\end{figure}
To evaluate the effectiveness of the proposed strategies, namely \emph{co-split}, and \emph{split-set-selection} strategies, we conduct experiments on three representative 
queries: Q1 (triangle), Q2 (rectangle), and Q5 (diamond). The experiments are performed on the WGPB, Orkut, 
GPlus, and Topcats datasets. For the USPatent and Skitter datasets, splitting is not applied to these queries 
and thus they are excluded from this evaluation. We consider three configurations in total: 
(i) \textbf{config1}: only split on a single table each time, and in this case, we will split on the tables and attributes that config3 chooses, 
(ii) \textbf{config2}: \emph{co-split} on a fixed split set, and 
(iii) \textbf{config3}: config2 enhanced with \emph{Split-Set-Selection} strategies mentioned in Section~\ref{sec:split_schedule}.

As shown in Table~\ref{tab:effectiveness_combined}, our final configuration, config3, achieves better performance compared with both config1 and config2.
When compared with config1, the improvement mainly comes from a reduction in the number of generated subqueries—config1 produces 4, 16, and 16 subqueries for Q1, Q2, and Q5, respectively, while config3 only produces 2, 4, and 4 subqueries.
When compared with config2, the gain primarily stems from a smaller upper bound on intermediate results with appropriate co-split set selection, which helps reduce execution overhead. For example, in Q1, splitting on a lighter join rather than a larger one significantly reduces the size of intermediate results.

\subsection{Case Study}\label{sec:case}
To provide a deeper understanding of how our approach works, we present a case study on query Q2 in the WGPB dataset. 
Q2 in Figure~\ref{fig:queries} can be expressed as $R_1(X,Y)\bowtie R_2(Y,W)\bowtie R_4(X,Z)\bowtie R_3(Z,W)$ where the original query is split into four subqueries by splitting on $R_1\bowtie R_2$ and $R_3\bowtie R_4$.

We compare the baseline plan in DuckDB with the plan in DuckDB + \sysname{}, as it reveals the advantage of split-based execution: \sysname{} avoids generating large intermediates by isolating heavy keys, leading to faster overall runtime (i.e., overall 29.9s with \sysname{} vs. 59.2s without \sysname{}).

Figure~\ref{fig:case_study_Q2} shows that the baseline plan produces a massive intermediate result when joining $R_3$ with $R_4$ and $R_1$ with $R_2$ (both of them expanding to more than 936 million tuples). This blowup significantly increases the number of hash probes in the hash table built by $(R_3\bowtie R_4)$, and it also increases the time consumption on building hash table on $(R_3\bowtie R_4)$, ultimately becoming the bottleneck of the entire join plan, as confirmed by the time breakdown of each step.

In contrast, \sysname{} decomposes Q2 into four subqueries (Table~\ref{fig:case_study_Q2_split}), each with more balanced join sizes. The largest intermediate relation is reduced to about 559 million tuples, which makes the final join steps substantially more efficient and effectively alleviates the bottleneck observed in the baseline. As a result, \sysname{} achieves a more stable execution profile and improves overall performance for Q2. This case study highlights the performance bottlenecks of the baseline and demonstrates how \sysname{} avoids skewed joins and achieves better efficiency.

\vspace{-5pt}

\section{Conclusion}\label{sec:conclusion}

In this work, we presented \sysname{}, a lightweight framework that introduces the split operator to minimize intermediate join results and mitigate data skew. Implemented as a non-intrusive front-end layer, \sysname{} integrates seamlessly with existing systems and delivers substantial performance gains on real-world skewed workloads. This work takes a first step toward making split-based strategies practical, opening new opportunities for adaptive cost models and broader integration into modern query engines.
%\clearpage

\bibliographystyle{ACM-Reference-Format}
\bibliography{ref}

\appendix

\newpage
\section{Analysis of Algorithm \ref{algorithm:BinaryJoin}}
\label{sec:TheoryAppendix}

\newcommand{\dl}{\text{ :- }}
\newcommand{\AGM}[1]{$\mathsf{AGM}(#1)$}
\newcommand{\mAGM}[1]{\mathsf{AGM}(#1)}

\subsection{Correctness}

Throughout this section, we will assume that the query $Q$ is connected, meaning there are no cartesian products.
This limitation can easily be remedied by computing each connected sub-query individually and then taking the cartesian product.

\begin{theorem}
    Algorithm \ref{algorithm:BinaryJoin} terminates and returns the correct output.
\end{theorem}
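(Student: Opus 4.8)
The plan is to prove termination by a finiteness argument and correctness by loop invariants on the collection $\mathcal{C}$ of components and the intermediates $T_c$ attached to them. For termination, each pass of the outer loop consumes a previously unused relation (in the step $T_c \gets R(X,Y)$), each pass of the first inner loop consumes a previously unused relation (in the step $T_{c\cup\{Y\}} \gets T_c \bowtie R(X,Y)$), and each pass of the merge loop deletes one element of $\mathcal{C}$ while $|\mathcal{C}|$ grows by at most one per outer pass; since $Q$ has finitely many relations, all loops run finitely often.

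For correctness, I would maintain, after every outer pass, the following invariants: (i) distinct components in $\mathcal{C}$ have disjoint vertex sets; (ii) the sets of relations ``claimed'' by the components of $\mathcal{C}$ partition the relations used so far; (iii) for each $c\in\mathcal{C}$, $T_c$ equals the natural join of exactly the relations claimed by $c$ and has schema $c$; and (iv) once a relation $R(X,Y)$ has been used, $X$ and $Y$ belong to a common component, and they stay together because component vertex sets only grow. Invariants (i)--(iii) are read off the body of the loop, using associativity and commutativity of the natural join for (iii), and (iv) holds because a relation is used only in the step $T_c \gets R(X,Y)$ (where $c=\{X,Y\}$) or the step $T_{c\cup\{Y\}}\gets T_c\bowtie R(X,Y)$ (where $X\in c$ already and $Y$ is then added).

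The main content is showing that when the outer guard fails, all relations have been used and $\mathcal{C}$ has collapsed to a single component. Here I would use the structural fact---true for every subinstance produced by the split phase of Section~\ref{sec:theory}---that every relation is a directed edge of $G_i$, i.e.\ has at least one light attribute. The key lemma is: if an attribute $X$ ever enters a component of $\mathcal{C}$, then no unused relation is light in $X$ at that moment, and hence at no later moment. To see this, $X$ first enters some component via $c\gets\{X,Y\}$ or via $c\gets c\cup\{Y\}$ in some outer pass, at which point the first inner loop runs to exhaustion, so no unused relation is light in $X$ afterward; relations never revert to unused; and attributes entering later via a merge were already in a component built earlier, so the lemma propagates. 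Now if the loop halted with an unused relation $R(X,Y)$, picking $X$ to be a light attribute of $R$, the lemma would force $X$ to be unclaimed, making the outer guard true---a contradiction. So every relation is used; then by (iv) each relation's endpoints share a component, by (i) distinct components are vertex-disjoint, and $Q$ is connected, so $\mathcal{C}=\{c\}$; by (iii), $T_c = \bowtie_R R = Q(I_i)$, which is exactly what the final line returns.

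The routine parts are termination and invariants (i)--(iv). The delicate step is ruling out premature termination of the outer loop, which hinges on the structural property that every relation of a split subinstance has a light attribute, combined with the persistence argument that a claimed attribute has no unused light relation; some care is also needed for attributes that first enter a component through a merge rather than through the steps $c\gets\{X,Y\}$ or $c\gets c\cup\{Y\}$.
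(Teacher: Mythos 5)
Your proof is correct and follows essentially the same route as the paper's: termination by finiteness of relations/components, and correctness by showing that every relation has a light attribute, that the inner loop's exhaustion of light joins forces every such relation to be claimed before the outer guard can fail, and that connectivity of $Q$ then collapses $\mathcal{C}$ to a single component whose intermediate is the full join. Your version is somewhat more careful than the paper's (explicit loop invariants, and the persistence lemma handling attributes that enter a component via a merge), but the underlying argument is the same.
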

\begin{proof}
    
    We will show the following three statements.
    \begin{enumerate}
        \item The algorithm terminates
        \item When the algorithm terminates, $\mathcal{C}$ has one element
        \item The unique element in $\mathcal{C}$ is the correct output of the query
    \end{enumerate}
    
    \textbf{1)} The loop on line 2 terminates because the number of iterations is bounded by the number of attributes. The loop on line 5 terminates because the number of possible light joins is bounded by the number of edges. The loop on line 8 terminates since the number of components in $\mathcal{C}$ is finite. \\
    \textbf{2)} Suppose there are more than one component in $\mathcal{C}$. One of these must have been discovered in the last iteration of the loop on line 2. Since the components were not merged, they cannot have any overlap. Since $Q$ is connected, this means there is at least one attribute that is not in any component, because otherwise some components would have intersected. Some such attribute would have either caused another iteration of the loop on line 2, or would have been discovered by a light join from some component, so its existence is a contradiction. \\
    \textbf{3)} We will show that every relation is joined into the output, which makes the output be correct. Let $e$ be an arbitrary relation. Let $x$ be the attribute on the light end of $e$. $x$ will be integrated into some component at some point, by the same argument as in 2). This component would contain the relation $e$, since the algorithm exhausts light joins. Since all components are merged, the output will be a join that includes the relation $e$.
\end{proof}

\subsection{Bounding the Size of Intermediate Results}

We will now show that Algorithm \ref{algorithm:BinaryJoin} matches the AGM bound. We remark that this works for join queries with binary relations, where all cardinalities are at most $N$. We will think of a query $Q$ as its query graph $(V,E)$, which we described previously.

\begin{definition}[Fractional Vertex Packing]
    Let $Q=(V,E)$ be a natural join query. A fractional vertex packing $\mathbf{u}$ is an assignment to every vertex $x\in V$ in the graph a weight $u_x\in[0,1]$ such that for every edge $\{x,y\}$, we have $u_x+u_y \leq 1$.
\end{definition}

We denote by $VP$ be the set of all possible vertex packings of $Q$.

\begin{theorem}[AGM Bound]
\label{theorem:AGM}
    Let $Q$ be a join query with binary relations. Let $I$ be a database instance such that every relation has size at most $N$. Then
    \[
    |Q(I)| \leq \mathcal{AGM}(Q) := \max_{\mathbf{u}\in VP} N^{\sum_{x} u_x}
    \]
\end{theorem}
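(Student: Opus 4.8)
The statement is the classical AGM bound specialized to graphs, and I would prove it by passing to the LP-dual formulation in terms of \emph{fractional edge covers} and then invoking the standard entropy argument. Recall that a fractional edge cover of $Q=(V,E)$ is an assignment $\mathbf{x}:E\to[0,1]$ with $\sum_{e\ni v}x_e\ge 1$ for every $v\in V$, and let $\rho^\ast=\min\{\sum_e x_e:\mathbf{x}\text{ a fractional edge cover}\}$. The LP computing $\rho^\ast$ has as its dual exactly $\max\{\sum_v u_v:\ u_u+u_v\le 1\ \forall\{u,v\}\in E,\ \mathbf{u}\ge 0\}$, i.e.\ the fractional vertex packing LP of the definition above; since both LPs are feasible (for a connected query graph), strong duality gives $\rho^\ast=\max_{\mathbf{u}\in VP}\sum_v u_v$, hence $\mathcal{AGM}(Q)=\max_{\mathbf{u}\in VP}N^{\sum_v u_v}=N^{\rho^\ast}$. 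It therefore suffices to show $|Q(I)|\le N^{\rho^\ast}$. (If the query graph has parallel edges, first replace each parallel bundle by the intersection of the corresponding relations; this does not change $Q(I)$ and yields a simple graph, so I may assume one.)

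For the inequality $|Q(I)|\le\prod_e|R_e|^{x_e}$ for an arbitrary fractional edge cover $\mathbf{x}$ — which gives $|Q(I)|\le N^{\sum_e x_e}$ since $|R_e|\le N$, and then $\le N^{\rho^\ast}$ by choosing an optimal $\mathbf{x}$ — I would use the entropy method. Assume $Q(I)\ne\emptyset$ (otherwise trivial), let $\mathbf{t}=(t_v)_{v\in V}$ be a uniformly random element of $Q(I)$, and for $S\subseteq V$ write $\mathbf{t}_S=(t_v)_{v\in S}$; then $H(\mathbf{t}_V)=\log|Q(I)|$. For each edge $e=\{u,v\}$, every tuple of $Q(I)$ projects onto $\{u,v\}$ to a pair of $R_e$, so $\mathbf{t}_e$ is supported on $R_e$ and $H(\mathbf{t}_e)\le\log|R_e|\le\log N$. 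Combining these marginal bounds is precisely the fractional version of Shearer's lemma: since $\sum_{e\ni v}x_e\ge 1$ for every $v$, one has $H(\mathbf{t}_V)\le\sum_{e\in E}x_e\,H(\mathbf{t}_e)$. Substituting the marginal bounds gives $\log|Q(I)|\le(\sum_e x_e)\log N=\rho^\ast\log N$, i.e.\ $|Q(I)|\le N^{\rho^\ast}=\mathcal{AGM}(Q)$.

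The only step requiring real care — the main obstacle if one wants a self-contained argument — is Shearer's lemma itself; I would prove it by induction on $|E|$, peeling off one edge at a time and using submodularity of Shannon entropy (equivalently, the chain rule together with ``conditioning does not increase entropy''), keeping track of how the covering inequalities degrade. An entropy-free alternative I could give instead is a direct induction on $|V|$: pick a vertex $v$, partition $Q(I)$ by the value of $t_v$, bound the contribution of each value using the generalized Hölder inequality with the exponents $\{x_e\}_{e\ni v}$ (whose sum is at least $1$), and observe that the remaining edge weights still form a fractional edge cover of the residual query, which closes the induction. Either way, everything beyond Shearer's lemma / the Hölder step is routine, and the LP-duality bridge at the start is what reconciles the vertex-packing form of the statement with the edge-cover form in which the proof naturally lives.
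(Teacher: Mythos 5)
Your proof is correct: it is the standard argument for the AGM bound, passing from the vertex-packing form to the fractional-edge-cover form by LP duality and then applying the fractional form of Shearer's lemma to a uniformly random output tuple. The paper itself does not prove this theorem---it imports it as a known result (citing the AGM paper) and only records the duality remark that the maximum fractional vertex packing equals the minimum fractional edge cover $\rho$, which is exactly the bridge your first paragraph establishes. The one place to be slightly careful, as you note, is Shearer's lemma itself; your proposed induction via submodularity (or the H\"older-based induction on $|V|$) is the standard way to discharge it, and the parallel-edge reduction you mention is a legitimate necessary preprocessing step since the query graph of a natural join query can have several relations on the same attribute pair.
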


The maximum fractional vertex packing is the same as the minimum fractional edge cover, by duality of linear programs. Therefore, the the maximum $\mathbf{v}\in VP$ has a total weight of $\rho$.

The goal of a worst-case optimal algorithm is to match this bound, i.e. make sure that no intermediate result has size more than $\mathcal{AGM}(Q)$.

\begin{definition}
    Given a query $Q=(V,E)$, and a subset of $I\subseteq V$, define the subquery $Q_I=(I,E_I)$, where $E_I=\{ \pi_IR | R\in E \text{ where } \exists x\in R \text{  s.t. } x\in I\}$. We will call subqueries on this form \textbf{induced subgraph queries}.
\end{definition}

The next lemma means that any intermediate relation that has the form of an induced subgraph query can never be too big. This is a key idea in the area of worst case optimal joins.

\begin{lemma}\label{lemma:InducedSubgraph}
    Let $Q=(V,E)$ be a query. Then, for any $I\subseteq V$
    \[ \mAGM{Q_I} \leq \mAGM{Q} \]
\end{lemma}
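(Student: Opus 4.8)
The statement to prove is $\mAGM{Q_I} \leq \mAGM{Q}$ for any $I \subseteq V$, where $Q_I$ is the induced subgraph query on vertex set $I$. The natural approach is to work with the dual characterization of $\mAGM{}$ given just above: $\mAGM{Q} = \max_{\mathbf{u} \in VP} N^{\sum_x u_x}$, where $VP$ is the set of fractional vertex packings. So it suffices to show that the optimal value of the vertex packing LP for $Q_I$ is at most the optimal value for $Q$; equivalently, that the maximum total weight $\sum_{x \in I} u_x$ over packings of $Q_I$ is at most the maximum total weight $\sum_{x \in V} u_x$ over packings of $Q$.

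First I would take an optimal fractional vertex packing $\mathbf{u}^*$ of $Q_I$, so $u^*_x \in [0,1]$ for $x \in I$ and $u^*_x + u^*_y \leq 1$ for every edge of $Q_I$. I would then extend it to a candidate packing $\mathbf{w}$ of $Q$ by setting $w_x = u^*_x$ for $x \in I$ and $w_x = 0$ for $x \in V \setminus I$. The key step is checking that $\mathbf{w}$ is a valid vertex packing of $Q$: for an edge $\{x,y\} \in E$, there are two cases. If both $x,y \in I$, then $\{x,y\}$ (or its projection, which is itself since both endpoints survive) is an edge of $Q_I$, so $w_x + w_y = u^*_x + u^*_y \leq 1$. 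If at least one endpoint, say $y$, lies outside $I$, then $w_y = 0$ and $w_x \leq 1$, so again $w_x + w_y \leq 1$. Hence $\mathbf{w} \in VP$ for $Q$, and $\sum_{x \in V} w_x = \sum_{x \in I} u^*_x$, which gives $N^{\sum_{x} u^*_x} \leq \mAGM{Q}$ and therefore $\mAGM{Q_I} \leq \mAGM{Q}$.

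The main subtlety — and the step I would be most careful about — is the precise definition of the edge set $E_I$ of the induced subgraph query: an edge $R = \{x,y\}$ of $Q$ contributes $\pi_I R$ to $E_I$ whenever \emph{some} endpoint lies in $I$. When only one endpoint, say $x$, lies in $I$, $\pi_I R = \{x\}$ is a unary relation (a self-loop / singleton), which does not actually impose any binary packing constraint of the form $u_x + u_y \le 1$ in the LP for $Q_I$ — at worst it enforces $2u_x \le 1$ if one reads it as a self-loop, but in either reading the extension argument still goes through because we only need the constraints of $Q_I$ to be implied by those of $Q$ under the zero-extension, and dropping a vertex only removes constraints. I would state explicitly which reading of $\pi_I R$ for a dropped endpoint is intended (most naturally, such projections are discarded or treated as trivially satisfied), and note that since the extended packing assigns weight zero outside $I$, none of these degenerate edges can be violated. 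Beyond that, the argument is a routine "restrict the LP, extend the solution by zero" monotonicity argument, so I do not expect any real obstacle once the definitions are pinned down.
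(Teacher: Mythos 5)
Your proposal is correct and follows essentially the same argument as the paper: extend an optimal fractional vertex packing of $Q_I$ by assigning weight zero to vertices in $V \setminus I$, observe feasibility for $Q$, and conclude the optimum for $Q$ is at least that for $Q_I$. Your additional care about the degenerate unary projections $\pi_I R$ is a reasonable clarification of a point the paper leaves implicit, but it does not change the substance of the argument.
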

\begin{proof}
    Consider any fractional vertex packing $\mathbf{u}_{Q_I}$ in $Q_I$. We can create a feasible fractional vertex packing $\mathbf{u}_Q$ for $Q$ by taking $\mathbf{u}_{Q_I}$, and extending it to $Q$ by assigning a weight $0$ to any vertex in $V-I$. These fractional vertex packings have the same weight. This means that the maximal fractional vertex packing in $Q$ is at least as big as the maximal fractional vertex packing in $Q_I$.
\end{proof}

\begin{lemma}\label{lemma:LightExpansionAGM}
    Let $Q=(V,E)$ be a query where all cardinalities are at most $N$. Let $I$ be an intermediate result from $Q$, that contains $k_I$ unique attributes, that is obtained by starting from any relation and doing light joins. Then, $|I| \leq N^{k_I/2} \leq \mathcal{AGM}(Q)$.
\end{lemma}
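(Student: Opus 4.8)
The plan is to establish the two inequalities $|I| \le N^{k_I/2}$ and $N^{k_I/2} \le \mathcal{AGM}(Q)$ separately. The first is proved by induction on the number of light joins used to construct $I$; the second by exhibiting one explicit fractional vertex packing.

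For the size bound, I would induct on the length of the sequence of light joins that produced $I$. In the base case, $I$ is a single starting relation $R(X,Y)$, which has $k_I = 2$ attributes and size $|R| \le N = N^{2/2}$. For the inductive step, look at the last light join $T \bowtie R_e(X,Y)$, where $X$ is the light endpoint of the edge $e$; by the threshold choice $\tau = \sqrt{N}$ used in this instantiation, every value of $X$ has degree at most $\sqrt{N}$ in $R_e$. There are two cases. If $Y$ is already an attribute of $T$, the join acts as a selection, $T \bowtie R_e(X,Y) \subseteq T$, so neither the size nor $k_I$ increases and the bound is inherited from $T$. If $Y$ is a new attribute, each tuple of $T$ extends to at most $\sqrt{N}$ result tuples (one per matching $Y$-value), so $|I|$ grows by a factor of at most $\sqrt{N}$ while $k_I$ grows by exactly $1$; thus $|I| \le N^{(k_I - 1)/2}\cdot\sqrt{N} = N^{k_I/2}$. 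Since light joins never delete attributes, $k_I$ is non-decreasing and the induction is well founded.

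For the second inequality, I would use the fractional vertex packing $\mathbf{u}$ that assigns weight $1/2$ to every attribute occurring in $I$ and $0$ to all other attributes of $Q$. Feasibility is immediate: for any edge $\{x,y\}$ of the query graph, $u_x + u_y \le 1/2 + 1/2 = 1$. Its total weight is $k_I/2$, so $\mathcal{AGM}(Q) = \max_{\mathbf{u}' \in VP} N^{\sum_x u'_x} \ge N^{k_I/2}$, as desired.

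The only real subtlety is the first case of the inductive step: a ``closing'' light join that connects two attributes already present in the intermediate, where one must observe that such a join can only filter, never expand, the intermediate. Everything else is a one-step degree-counting argument together with a trivial LP-feasibility check. This lemma is also the workhorse for the more delicate part of the main theorem (bounding the merge step), since the merged intermediates will be shown to be close to induced-subgraph form, at which point Lemma~\ref{lemma:InducedSubgraph} applies.
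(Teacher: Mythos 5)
Your proposal is correct and follows essentially the same argument as the paper: the first inequality via the observation that each light join reaching a new attribute multiplies the size by at most $\sqrt{N}$ (with your explicit handling of the ``closing'' case being a welcome elaboration of what the paper leaves implicit), and the second via a trivially feasible half-weight vertex packing (you restrict the weights to the attributes of $I$, the paper assigns $1/2$ everywhere and uses $N^{k_I/2}\leq N^{k/2}$ --- an immaterial difference).
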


\begin{proof}
    The first inequality follows from the fact that a light join that reaches a new attribute grows the intermediate size by at most a factor $\sqrt{N}$, and that we start with a single relation, whose size is at most $N^{2\times 1/2}$. For the second inequality, let $k$ be the number of vertices in $Q$. Clearly $N^{k_I/2}\leq N^{k/2}$. The inequality $N^{k/2}\leq \mathcal{AGM}(Q)$ then follows from the fact that we can obtain a feasible vertex packing by assigning weight $1/2$ to every vertex.
\end{proof}

\begin{lemma}\label{lemma:MergeStepFine}
    Let $Q=(V,E)$ be a join query and let $I$ be a database instance. Let $U\subset V$. Suppose that
    \begin{enumerate}
        \item for every $Y\in V-U$ there exists a relation $R(X,Y)\in E$ for some $X\in U$
        \item any relation $R(X,Y)\in E$ where $X\in U,Y\in V-U$ is light in $X$
    \end{enumerate}
    Let $T$ be an intermediate relation obtained by computing $Q_{U}$ and then joining with all relations $R(X,Y)\in E$ where $X\in U$, $Y\in V-U$. Then
    \[ |T| \leq \mathcal{AGM}(Q) \]
\end{lemma}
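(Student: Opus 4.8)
The plan is to recognize $T$ as the output of an honest join query and then bound its size with a degree-aware refinement of the AGM bound.

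\emph{Step 1: identify $T$.} Write $W = V-U$ and partition the relations of $Q$ into $E_1$ (both attributes in $U$), $F$ (one attribute in $U$, one in $W$), and $E_2$ (both attributes in $W$). By definition $Q_U$ consists of the relations of $E_1$ together with the unary projections $\pi_X R$ for each $R(X,Y)\in F$. Since $\pi_X R \bowtie R(X,Y) = R(X,Y)$, computing $Q_U$ and then joining every relation of $F$ produces exactly $T = Q'(I)$, where $Q' = (V,\, E_1\cup F)$ is $Q$ with the $W$-internal relations $E_2$ deleted. Hypothesis (1) says every $Y\in W$ lies in at least one relation of $F$; fix one such $R_Y(X_Y,Y)$ per $Y$. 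Hypothesis (2) says each $R_Y$ is light in $X_Y$, i.e. $\delta_{R_Y}(X_Y)\le\sqrt N$.

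\emph{Step 2: reduce to a weighted cover.} Because $T=Q'(I)$, I would invoke the AGM bound with degree constraints (the standard strengthening of Theorem~\ref{theorem:AGM}; see e.g.\ \cite{AboKhamisNgoSuciu2025}): $|Q'(I)|$ is at most $N$ raised to the cost of any feasible fractional cover of $Q'$ that may use, at unit cost, the cardinality bound $|R|\le N$ of any relation $R\in E_1\cup F$, and, at cost $\tfrac12$, the degree bound $\delta_{R_Y}(X_Y)\le\sqrt N$ of any frontier relation; feasibility asks for an elimination order of the attributes in which every attribute is covered by constraints whose ``guard'' attributes appear earlier. Since the maximum fractional vertex packing of $Q$ equals its minimum fractional edge cover $\rho$ and $\mathcal{AGM}(Q)=N^{\rho}$, it suffices to construct a feasible cover of $Q'$ of cost exactly $\rho$.

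\emph{Step 3: build the cover.} Take an optimal fractional edge cover $\{w_e\}_{e\in E}$ of $Q$ (so $\sum_e w_e=\rho$ and every attribute is covered to $\ge 1$), and use the elimination order that lists all attributes of $U$ before all attributes of $W$. Give each $R\in E_1\cup F$ the cardinality weight $w_R$, and give the degree constraint of $R_Y$ the weight $\mu_Y := \sum_{e\in E_2:\,Y\in e} w_e$. An attribute $X\in U$ only touches relations of $E_1\cup F$, so its coverage is $\sum_{e\in E_1\cup F:\,X\in e} w_e = \sum_{e\in E:\,X\in e} w_e \ge 1$. An attribute $Y\in W$ is covered by the cardinality weights of its frontier relations plus $\mu_Y$, totalling $\sum_{e\in F:\,Y\in e} w_e + \sum_{e\in E_2:\,Y\in e} w_e = \sum_{e\in E:\,Y\in e}w_e\ge 1$, and the degree constraint of $R_Y$ is admissible for $Y$ because its guard $X_Y$ lies in $U$ and hence precedes $Y$. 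Finally, each $e\in E_2$ has exactly two endpoints, both in $W$, so $\sum_{Y\in W}\mu_Y = 2\sum_{e\in E_2}w_e$, whence the cost is $\sum_{e\in E_1\cup F}w_e + \tfrac12\sum_{Y\in W}\mu_Y = \sum_{e\in E}w_e = \rho$. Therefore $|T| = |Q'(I)| \le N^{\rho} = \mathcal{AGM}(Q)$.

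\emph{Main obstacle.} The delicate point is Step~2: stating the degree-aware cardinality bound precisely and verifying that the cover of Step~3 is genuinely feasible (the elimination-order condition), since the naive estimate $|Q_U(I)|\cdot N^{|W|/2}$ is really too weak --- it can exceed $\mathcal{AGM}(Q)$ (already for the triangle with $|U|=1$), so the ``induced-subgraph'' part and the light-expansion part must be accounted for jointly, which is exactly what the shared cover above does. If a self-contained argument is preferred, replace Step~2 by the entropy proof: with $h$ the entropy of the uniform distribution on $T=Q'(I)$, expand $h(V)$ by the chain rule in the $U$-then-$W$ order, bound each $W$-term by $h(Y\mid\text{earlier})\le h(Y\mid X_Y)\le \log\delta_{R_Y}(X_Y)\le\tfrac12\log N$, and bound the remaining terms via $h(R)\le\log N$ as in the Friedgut--Kahn proof of the AGM bound, with the Step~3 weights certifying the total is $\le\rho\log N$. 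Two degeneracies deserve a remark: we assume (as elsewhere in this section) that $Q$ is connected, so $\rho$ is finite, and the case $W=\emptyset$ is immediate since then $T=Q(I)$.
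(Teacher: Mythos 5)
Your proof is correct in substance, but it takes a genuinely different route from the paper. You identify $T$ as the output of the reduced query $Q'=(V,E_1\cup F)$ and then certify $|Q'(I)|\le N^{\rho}$ by a degree-aware fractional cover: cardinality constraints on $E_1\cup F$ inherit the weights $w_e$ of an optimal edge cover of $Q$, and the weight of each deleted edge $e\in E_2$ is shifted onto the $\sqrt N$-degree constraints of its two endpoints' frontier relations, preserving both feasibility and total cost $\rho$. The paper instead avoids any degree-aware generalization of the AGM bound: it constructs a modified instance $I'$ of the \emph{full} query $Q$ (keeping $E_1$, replacing each $E_2$-relation by the product $[\sqrt N]\times[\sqrt N]$, and replacing each frontier relation by a star $(x,1),\dots,(x,\deg_R(x))$), shows $|T|\le|Q(I')|$ by comparing the multiplicity functions $\Psi_I\le\Psi_{I'}$ fiber-by-fiber over $\pi_U T$, and then applies the plain Theorem~\ref{theorem:AGM}. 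Your approach buys a cleaner conceptual picture (the weight-shifting makes it transparent where hypotheses (1) and (2) enter and why the cost is exactly $\rho$) at the price of needing the degree-constrained bound of Step~2, which you rightly flag as the delicate point; the paper's approach is self-contained given only the vanilla AGM bound, at the price of a somewhat ad hoc instance construction. One caution on your fallback entropy sketch: a literal ``chain rule then bound each term'' in the $U$-then-$W$ order puts coefficient $1$ on every $h(Y\mid X_Y)$ and yields $\sum_{e\in E_1\cup F}w_e+|W|/2$, which exceeds $\rho$ when $\sum_{e\in E_2}w_e<|W|/2$; you must instead run the standard query-decomposition argument with your mixed weights, lower-bounding each $w_e\,h(e)$ by $w_e\sum_{v\in e}h(v\mid\mathrm{prec}(v))$ and each $\mu_Y\,h(Y\mid X_Y)$ by $\mu_Y\,h(Y\mid\mathrm{prec}(Y))$, so that the per-vertex coverage condition (not the chain rule alone) certifies the total. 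With that reading, your Step~3 weights do establish the required Shannon inequality and the lemma follows.
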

\begin{proof}
    We will show that there exists an instance $I'$ such that $|T| \leq Q(I')$. This proves the lemma by Theorem \ref{theorem:AGM}. For some $R\in E$, denote by $R_I$ and $R_{I'}$ the relation-instances in $I$ and $I'$, respectively.

    For any relation $R(X,Y)$ where $X,Y\in U$, let $R_{I'}=R_I$. For any $R(X,Y)$ where $X,Y\in V-U$, $R_{I'}$ is a product-relation $[\sqrt{N}]\times[\sqrt{N}]$. Finally, consider relations $R(X,Y)$ where $X\in U$ and $Y\in V-U$. For each $x\in \pi_X R_I$, add to $R_{I'}$ the tuples $(x,1),\dots,(x,\deg_R(x))$.

    The cardinality constraints are satisfied for $I'$, since for each $R$ with at least $1$ variable in $U$, each tuple in $R_I$ adds one tuple in $R_{I'}$.
    
    We now show $|T| \leq Q(I')$. Let $\Psi_I:\pi_UT\rightarrow\mathbb{Z}_{\geq 0}$, where $\Psi_I(t)=|\{ t'\in T, \pi_{U}t'=t \}|$ (resp. let $\Psi_{I'}:\pi_UQ(I')\rightarrow\mathbb{Z}_{\geq 0}$, where $\Psi_{I'}(t)=|\{ t'\in Q(I'), \pi_{U}t'=t \}|$). It is clear that the output sizes now can be written as the sum $\sum_{t\in \pi_{U}T}\Psi_I(t)$ (resp. $\sum_{t\in \pi_{U}Q(I')}\Psi_{I'}(t)$). It remains to show that for all $t$, $\Psi_I(t) \leq \Psi_{I'}(t)$.
    
    We claim the following
    \[
    \Psi_{I'}(t) = \prod_{y\in V-U}\min_{R(x,y):x\in U}\deg_R(x)
    \]
    This is because of how the relations are picked, all the relations going to some $y\in V-U$ cover the same values. Additionally, from the assumption $(2)$, $\deg_R(x)\leq\sqrt{N}$, which means that the relations with both variables in $V-U$ do not do any filtering. Finally, by the assumption $(1)$, there exists a light join reaching $y$, meaning the minimization above is well defined.
    
    On the other hand, $\Psi_{I}(t)$ is upper bounded by the above. This is because these minimum degree relations are needed to cover any output tuple using $t$. The relations of larger degrees may then act as filters, reducing $\Psi_I(t)$.
\end{proof}

We will finally prove the main theorem. 

\TheoryAlgMatchAGM
\begin{proof}
    This instantiation creates $2^l$ subproblems, where $l$ is the number of relations. For each subproblem, we have a join plan, where every relation appears once. Hence, the number of binary joins performed only depends on the query. Using that a binary join can be computed in time $O(IN+OUT)$, it only remains to argue that all intermediate relations have size at most $\mathcal{AGM}(Q)$.
    
    By Lemma \ref{lemma:LightExpansionAGM}, the intermediate result obtained from light joins, i.e. between lines 3 and 5 in the algorithm, is bounded by AGM. Additionally, when the merging step is finished, at line 9, the intermediate result has the form of an induced subquery, which is bounded by AGM by Lemma \ref{lemma:InducedSubgraph}. It therefore remains to argue that the the size is bounded by $\mathcal{AGM}(Q)$ also for intermediates obtained during any iteration of the loop on line 6. 
    
    Let $V'$ be the set of attributes occurring in such an intermediate $T$, and let $U$ be the set of attributes that were first included in any intermediate by $T$. Any attributes in $V'-U$ must have occurred in another intermediate relation that was created before $T$. $T$ includes all relations in $Q_{U}$ and additionally, any relation $R(X,Y)$ where $X\in U$, $Y\in V'$, where $R$ is light in $X$. Furthermore, there cannot exist any $R'(X',Y')$ where $X'\in V'-U$, $Y'\in U$ that is light in $X'$, because then an earlier intermediate result would have had a light join to $Y'$.
    
    Therefore, any such intermediate $T$ has the form described in Lemma \ref{lemma:MergeStepFine}, in the query $Q_{V'}$. The application of this Lemma finishes this proof. 
\end{proof}

\end{document}